\documentclass[11pt,reqno]{amsart}

\usepackage{amsmath,amssymb}
\usepackage{graphicx}
\usepackage{xcolor}
\usepackage[colorlinks=true,linkcolor=blue!60!black,citecolor=blue!60!black,urlcolor=blue!60!black]{hyperref}
\usepackage{bm}
\usepackage[margin=1.1in]{geometry}
\usepackage[square,numbers,sort]{natbib}
\newtheorem{theorem}{Theorem}[section]
\newtheorem{proposition}[theorem]{Proposition}
\newtheorem{corollary}[theorem]{Corollary}
\newtheorem{lemma}[theorem]{Lemma}

\newtheorem{mainthm}{Theorem}

\newtheorem*{informal}{Main Theorem (informal)}

\theoremstyle{remark}
\newtheorem{remark}[theorem]{Remark}

\newcommand{\R}{\mathbb{R}}
\newcommand{\T}{\mathbb{T}}
\newcommand{\E}{\mathbb{E}}
\newcommand{\Z}{\mathbb{Z}}
\newcommand{\PP}{\mathbb{P}}
\newcommand{\cJ}{\mathcal{J}}
\newcommand{\cI}{\mathcal{I}}
\newcommand{\cK}{\mathcal{K}}

\title[High dimensional basins of attraction]{The tentacles landscape: geometric properties of high-dimensional basins of attraction}

\author{Pablo Groisman}
\address{Departamento de Matem\'atica, Facultad de Ciencias Exactas y
Naturales, Universidad de Buenos Aires, and IMAS-UBA-CONICET, Buenos
Aires, Argentina}
\email{pgroisma@dm.uba.ar}

\subjclass[2020]{Primary 34D06, 34C15; Secondary 60F05, 37C75}
\keywords{Basins of attraction, synchronization, Kuramoto model,
winding number, multistability, high-dimensional geometry}

\begin{document}

\begin{abstract}
Basins of attraction in multistable high dimensional dynamical systems are expected to have universal features but very little has been proved rigorously. We consider phase oscillators coupled according to a cycle graph, $\dot\theta_i = f(\theta_{i+1}-\theta_i) + f(\theta_{i-1}-\theta_i)$, with coupling $f$ that is $C^1$, odd, $2\pi$-periodic, and strictly increasing on $(-\pi,\pi)$. We prove the full ``octopus'' picture of the basins of attraction observed numerically by Zhang and Strogatz [Phys.\ Rev.\ Lett.\ 127 (2021) 194101] and, beyond this model, across a wide family of high-dimensional multistable systems.

In our case, we have a family of stable equilibria that can be indexed by their winding number $q \in \Z\cap(-n/2,n/2)$. Basin volumes obey a Gaussian law $\mu(\cK_q)=\sqrt{6/(\pi n)}\,e^{-6q^2/n}(1+o(1))$ in the winding
number. The distance from a uniform sample to its attractor, when divided by $\sqrt n$, concentrates at $\sqrt{\pi^2/3}\approx 1.814$. Along almost every straight line through any twisted state, the ray enters every other basin infinitely many times, with frequencies given by the basin volumes. The
inscribed ball at a twisted state has radius $({\pi}/{\sqrt2})(1-{2|q|}/{n})$ for every $n$, while as $n\to\infty$, a typical ray travels distance $(\pi/2)\sqrt{n/\log n}$ before first leaving the basin: the head of
the octopus is sharply anisotropic.
\end{abstract}

\maketitle

%% ============================================================
\section{Introduction}\label{sec:intro}
%% ============================================================

Multistable dynamical systems pervade science and engineering: power
grids settle into one of many synchronous
configurations~\cite{menck2014dead,dorfler2013synchronization}, neural
networks descend to one of exponentially many low-loss
minima~\cite{li2018visualizing,choromanska2015loss}, ecosystems
organize into competing
equilibria~\cite{biroli2018marginally,altieri2021properties}, proteins can assume a large number of nearly isoenergetic conformations \cite{Frauenfelder1991} and
amorphous solids freeze into one of countless jammed
packings~\cite{stillinger2015energy,charbonneau2017glass}. In each
case, which outcome the system reaches is determined not by the
attractors themselves but by the \emph{basins of attraction}---the
regions of state space that flow to each one. Basin volumes set the probabilities of relaxation outcomes---the \emph{basin stability} of each state, in the sense of Menck \emph{et al.}~\cite{menck2013basin}, a nonlocal measure complementing linear stability---and basin shapes control sensitivity to perturbations and the routes between
attractors~\cite{milnor1985concept,ott2002chaos,aguirre2009fractal}.

In low dimensions, basins can be visualized and understood. In high
dimensions, the picture has been less clear. Geometric intuition
fails in unfamiliar ways~\cite{Donoho,Talagrand,AGM}: the volume of a
high-dimensional shape concentrates in unexpected regions, simple
objects like hypercubes have most of their volume in narrow corners
far from the center, and the success of methods that work in low
dimensions can collapse without warning. Yet across a striking range
of systems, numerical evidence has converged on a common geometric
picture. Studies of jammed sphere packings reported basins shaped
like high-dimensional octopi, with most of their volume in long
filamentary tentacles rather than near the
attractor~\cite{ashwin2012calculations,martiniani2016structural,martiniani2017thesis}.
Independent work on the Kuramoto model on a cycle graph found the
same picture for synchronization
basins~\cite{martiniani2017thesis,zhang2021basins}, where Zhang and
Strogatz coined the ``basins with tentacles'' name and reported a
striking set of geometric features: basin sizes following a Gaussian
law in the winding number, a universal distribution of distances
between random states and their attractors, and the complete failure
of local hypercube approximations to capture basin volume. A
perspective by Casiulis and Martiniani~\cite{casiulis2023when}
articulated this as a universality picture spanning glasses, granular
packings, neural networks, and dynamical systems more broadly.

All of this evidence has been numerical, and recent work has shown
that even the numerical evidence is more fragile than was understood.
Suryadevara, Casiulis, and Martiniani~\cite{suryadevara2025basins}
demonstrated that standard optimizers used to identify basins in
soft-sphere packings produce systematically wrong answers in
moderately high dimensions, attributing initial conditions to the
wrong basins and creating spurious fractal-like geometry. Apparent
power-law features in earlier studies turned out to be artifacts of
inadequate sampling of broad distributions; the basins themselves,
when probed correctly, seem to be smooth structures with well-defined
length scales. The high-dimensional curse cuts both ways: it makes
the geometry hard to picture, \emph{and} it makes the numerics that
picture it unreliable. A rigorous instance of the octopus
picture---in any concrete multistable model---has been missing.

In this paper we provide one. Consider $n$ identical phase
oscillators coupled according to a cycle graph. In a co-rotating frame, the system reads,
\begin{equation}
\dot\theta_i = f(\theta_{i+1}-\theta_i) + f(\theta_{i-1}-\theta_i),
\qquad i=1,\dots,n, \quad (\text{mod }n).
\label{eq:model-intro}
\end{equation}
The coupling function $f$ is assumed to be $C^1$, odd, $2\pi$-periodic, and \emph{strictly increasing on
$(-\pi,\pi)$} (hypotheses (H1)--(H3) below).
% The standard Kuramoto coupling $f=\sin$ satisfies the first three conditions but narrowly
% fails the fourth: $\sin$ is increasing only on $(-\pi/2,\pi/2)$. This
% single hypothesis change has a decisive consequence.
Note that, denoting ${\bm \theta} = (\theta_1, \dots, \theta_n)$ our system \eqref{eq:model-intro} reads
\begin{equation}
\dot {\bm\theta} = -\nabla E_n({\bm \theta}).
\label{eq:gradient}
\end{equation}
The energy landscape underlying our model, $E_n(\bm\theta)=\sum_j F(\theta_{j+1}-\theta_j)$ with $F'=f$, belongs
to a broader family of sum-over-a-convex-function-of-edge-differences energies, that are ubiquitous.

% VA DESPUES.
%
% In most cases, places: the self-attention dynamics of transformer networks are a gradient flow of a structurally analogous
% energy~\cite{geshkovski2024emergence}, and the AKOrN architecture of
% Miyato et al.~\cite{miyato2025akorn} uses Kuramoto-type oscillator
% coupling explicitly as the forward pass of a neural network. We
% return to these connections in Section~\ref{sec:discussion}.

It is easy to check that the set of equilibria of \eqref{eq:model-intro} (critical points of $E_n$) in $\cJ$ is given by,
\[
\theta_i^{(q)} = \frac{2\pi q i}{n} + c , \qquad i=1,\dots n,
\]
for $|q|=0,1,2,\dots,\lfloor n/2 \rfloor$. They are called {\em twisted states}.

Writing $\eta_i=\theta_{i+1}-\theta_i$ for the phase differences, we define, for every ${\bm \theta}$ with $|\eta_i|\ne \pi$, $i=1,\dots, n$, the winding number
$$
I({\bm \theta}) =\frac{1}{2\pi}\sum_i\eta_i.
$$
Observe that $I({\bm \theta}^{(q)})=q$. The {\em basin of attraction} of each of them is given by
\[
 \mathcal K_q = \{{\bm \theta_0} \colon {\bm \theta}^{\bm \theta_0}_t \to {\bm \theta}^{(q)}, \text{ as } t\to\infty\}.
\]
Our main result is the following geometric description of the basins
$\cK_q\subset\T^n$ of the $q$-twisted states; precise statements,
with the exact asymptotic regime and mode of convergence for each
part, are given as Theorems~\ref{thm:gaussian}--\ref{thm:head} in
Section~\ref{sec:results}.

\begin{informal}
Let $\mu$ be the uniform probability measure on $\T^n$ and let
$\cK_q$ be the basin of attraction of the $q$-twisted state. Under
hypotheses \textup{(H1)--(H3)}:
\begin{enumerate}
\item[(i)]\textup{(Flow invariance and conservation law.)} For every $n$, the region
$\cJ=\{|\eta_i|<\pi \text{ for all } i\}$ is flow-invariant, and
the winding number is conserved for all $t\ge 0$ along every trajectory starting in $\cJ$ \textup{(Proposition~\ref{prop:invariance})}. As a consequence, ${\bm \theta}^{(q)}$ is stable for every $-n/2 < q < n/2$ \textup{(Corollary~\ref{cor:stability})}.
\item[(ii)] \textup{(Volumes.)} As $n\to\infty$, $\mu(\cK_q)=\sqrt{6/(\pi n)}\,
  e^{-6q^2/n}(1+o(1))$, uniformly for $q=O(\sqrt n)$
  \textup{(Theorem~\ref{thm:gaussian})}.
\item[(iii)] \textup{(Typical distance.)} As $n\to\infty$, the normalized distance from
  a uniform sample of $\cK_q$ to its attractor concentrates at
  $\sqrt{\pi^2/3}\approx 1.814$.
  \textup{(Theorem~\ref{thm:master})}.
\item[(iv)] \textup{(Tentacles.)} As $n\to\infty$, almost every ray from any twisted
  state visits every basin with frequency equal to its volume, entering and exiting each one
  infinitely many times \textup{(Theorem~\ref{thm:tentacles})}.
\item[(v)] \textup{(Head.)} The largest ball inscribed in $\cK_q$ at
  the attractor has radius $({\pi}/{\sqrt2})(1-2|q|n^{-1})$, while a
  typical ray first exits $\cK_q$ at distance approximately $(\pi/2)\sqrt{n/\log n}$, as $n\to\infty$: the head is strongly anisotropic and carries a vanishing fraction of the basin's volume
  \textup{(Theorem~\ref{thm:head})}.
\end{enumerate}
\end{informal}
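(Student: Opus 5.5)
The plan is to work throughout in the edge variables $\eta_i=\theta_{i+1}-\theta_i$, each taken in $(-\pi,\pi]$. Every property of $\cK_q$ will then reduce to a statement about the random vector $(\eta_1,\dots,\eta_n)$. For part (i), differentiating and using that $f$ is odd gives $\theta_i'=f(\eta_i)-f(\eta_{i-1})$. Hence
\[
\dot\eta_i=f(\eta_{i+1})-2f(\eta_i)+f(\eta_{i-1}),
\]
which is a discrete Laplacian of $f(\bm\eta)$. The sum $\sum_i\dot\eta_i$ telescopes to $0$, so $I$ is constant as long as the trajectory stays in $\cJ$. Invariance of $\cJ$ comes from a maximum principle. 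At an index where $\eta_i$ is maximal, strict monotonicity of $f$ on $(-\pi,\pi)$ gives $f(\eta_i)\ge f(\eta_{i\pm1})$, so $\dot\eta_i\le 0$; symmetrically $\dot\eta_i\ge 0$ at a minimum. Thus $\max_i|\eta_i|$ is nonincreasing, and no coordinate can reach $\pm\pi$.

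For stability, note that on $\cJ$ the energy $\sum F(\eta_i)$ is strictly convex, since $F'=f$ is strictly increasing. Restricted to the affine slice $\{\sum\eta_i=2\pi q\}$ it therefore has the unique critical point $\eta_i\equiv 2\pi q/n$. The gradient flow $\eqref{eq:gradient}$ then sends every point of $\cJ\cap\{I=q\}$ to $\bm\theta^{(q)}$. This yields Corollary~\ref{cor:stability}. It also shows that, since $\T^n\setminus\cJ$ is $\mu$-null (only the hyperplanes $\eta_i=\pi$ are removed), we have $\cK_q=\{I=q\}$ up to a null set. This identification is the engine for everything else.

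For (ii), under $\mu$ the variables $\eta_1,\dots,\eta_{n-1}$ are i.i.d.\ uniform on $(-\pi,\pi)$, and $\eta_n$ is the reduction of $-S_{n-1}$, where $S_{n-1}=\eta_1+\dots+\eta_{n-1}$. The event $I=q$ is exactly $S_{n-1}\in(2\pi q-\pi,2\pi q+\pi)$. The local CLT, with variance $\pi^2/3$ per step and uniformly for $q=O(\sqrt n)$, gives
\[
\mu(\cK_q)\approx 2\pi\,\frac{e^{-(2\pi q)^2/(2n\pi^2/3)}}{\sqrt{2\pi n\pi^2/3}}=\sqrt{6/(\pi n)}\,e^{-6q^2/n}.
\]

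For (iii), write $\theta_i-\theta^{(q)}_i-c$ as a random-walk bridge with centered increments $\eta_i-2\pi q/n$, read modulo $2\pi$ because the distance is the torus distance. Conditionally on $I=q$, each reduced coordinate is asymptotically uniform on $(-\pi,\pi)$, and correlations decay geometrically along $i$ because the walk on the circle mixes. A weak law of large numbers for $\frac1n\sum_i \mathrm{dist}_{\T}(\theta_i,\theta_i^{(q)}+c)^2$ then gives the limit $\pi^2/3$ for each fixed $c$. I expect the main obstacle to be making this uniform over the minimizing rotation $c$. I would handle it with an $\varepsilon$-net in $c$, a union bound using exponential concentration (for example McDiarmid's inequality on the $\eta$'s, after removing the conditioning via the local CLT), and Lipschitz continuity in $c$.

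For (iv), a ray $\bm\theta^{(q)}+t\bm v$ is a linear flow on $\T^n$. For Lebesgue-a.e.\ $\bm v$ the coordinates are rationally independent, so Weyl's theorem gives unique ergodicity. Each $\cK_p$ is, up to a null set, a finite union of polytopes, so its boundary is null. Equidistribution therefore gives time-frequency $\mu(\cK_p)>0$ in each basin. Positive frequency forces infinitely many entries and exits, since a ray that eventually stayed in one basin would give that basin frequency $1$.

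For (v), perturbing by $\bm\delta$ replaces $\eta_i$ by $2\pi q/n+\delta_{i+1}-\delta_i$. The flow stays in $\cK_q$ as long as every $|\eta_i|<\pi$, because $I$ is then continuous and integer-valued along the perturbation and hence stays equal to $q$. Exit first requires $|\delta_{i+1}-\delta_i|=\pi-2\pi|q|/n$ for some $i$. The cheapest such $\bm\delta$ is $\pm a/2$ on the two endpoints of that edge, whose norm is $a/\sqrt2$; this gives the radius $(\pi/\sqrt2)(1-2|q|/n)$. For a uniform direction $\bm v\approx \bm g/\sqrt n$ with $\bm g$ standard Gaussian, the differences $v_{i+1}-v_i$ are $1$-dependent Gaussians of variance $2/n$. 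Their maximum is therefore $\sim 2\sqrt{\log n/n}$, and the exit time is $t\approx\pi/(2\sqrt{\log n/n})=(\pi/2)\sqrt{n/\log n}$. Finally, the fraction of $\cK_q$ occupied by the head follows by comparing the volume of the star-shaped region $\{\bm\delta:\max|\delta_{i+1}-\delta_i|<\pi\}$ near the attractor with $\mu(\cK_q)$; I expect this ratio to be exponentially small in $n$.
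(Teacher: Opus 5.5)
Parts (ii), (iv) and the two quantitative claims of (v) follow essentially the paper's proofs: the local CLT for $S_{n-1}$; Weyl equidistribution with basin boundaries inside $\bigcup_i\{\eta_i=\pi\}$; $\sup_{\|\bm v\|=1}|(A\bm v)_i|=\sqrt2$; and the Gaussian representation with $1$-dependent differences. For the typical ray you set $\beta=2\pi q/n$ to $0$ without comment; this needs $|q|=o(n)$, as in Theorem~\ref{thm:head}(ii).

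Your maximum principle for (i) is cleaner than the paper's first-exit contradiction. Knowing that $\max_i|\eta_i(t)|$ is nonincreasing gives global existence and a compact invariant sublevel set for LaSalle. It also avoids the paper's appeal to uniqueness at $\bm\eta\equiv\pi$.

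For (iii) you also take a different route: exponential concentration, a finite net in the rotation $c$, then division by $\mu(\cK_q)\asymp n^{-1/2}e^{-6q^2/n}$. This is more elementary and more complete than the bivariate local limit theorem the paper only sketches. It also covers a sample-dependent rotation of the attractor and all $q=O(\sqrt n)$. However, the tool you name fails. McDiarmid's inequality in the $\eta$'s has bounded differences of order $1$, because changing one $\eta_j$ rigidly rotates $\theta_{j+1},\dots,\theta_n$. Apply Hoeffding to the $\theta_i$ themselves, which are i.i.d.\ uniform under $\mu$. For fixed $c$ the deviation probability is then at most $2e^{-2n\varepsilon^2/\pi^4}$. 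Since $c\mapsto\frac1n\sum_i d_i(c)^2$ is $2\pi$-Lipschitz, $O(1/\varepsilon)$ net points suffice. With this, the bridge/mixing heuristic is unnecessary.

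The genuine gap is your last sentence of (v). The region $\{\bm\delta:\max_i|\delta_{i+1}-\delta_i|<\pi\}$ is not a small head.
\begin{itemize}
\item Take any $\bm\theta$ with all $|\eta_i|<\pi$ and $I(\bm\theta)=0$. The prescription $\delta_{i+1}-\delta_i=\eta_i$ closes up around the cycle exactly because $\sum_i\eta_i=0$.
\item Hence this convex cylinder maps, one-to-one modulo $2\pi\Z\bm 1$ and measure-preservingly, onto all of $\cK_0\cap\cJ$. The same holds for $\cK_q$ with the constraint $|\beta+\delta_{i+1}-\delta_i|<\pi$.
\item The fraction you propose to compute is therefore $1$, not exponentially small.
\end{itemize}
Put differently, the lifted basin is convex, so the set swept by rays before their first exit is the whole basin. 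The tentacles are this single polytope wrapping around $\T^n$: for $v_i=\sqrt{2/n}\cos(2\pi i/n)$ one gets $\lambda^*(\bm v)\asymp n^{3/2}$, far beyond the diameter $\pi\sqrt n$ of the torus.

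The claim only has content once the head is defined, either metrically or by a no-wrap condition.
\begin{itemize}
\item Metric head: $\{\bar d(\bm\theta,\bm\theta^{(q)})\le r\}$ with fixed $r<\pi/\sqrt3$. The Hoeffding bound above, divided by $\mu(\cK_q)$, gives an exponentially small fraction.
\item No-wrap head: the points whose geodesic segment to the attractor never leaves $\cK_q$. This is the event $N_n=0$ of Proposition~\ref{cor:crossings}, which has probability at most $2e^{-n/32}$.
\end{itemize}
The paper itself derives this claim from Theorem~\ref{thm:master}, not from Theorem~\ref{thm:head}.
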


Combined, (i)--(v) force the basin's mass into the tentacles: the
head is a vanishing fraction, and the bulk concentrates at the
typical distance $\sqrt{\pi^2/3}$, distributed across state space
like any set of comparable measure.

\begin{figure}[t!]
\centering
\includegraphics[width=.95\textwidth]{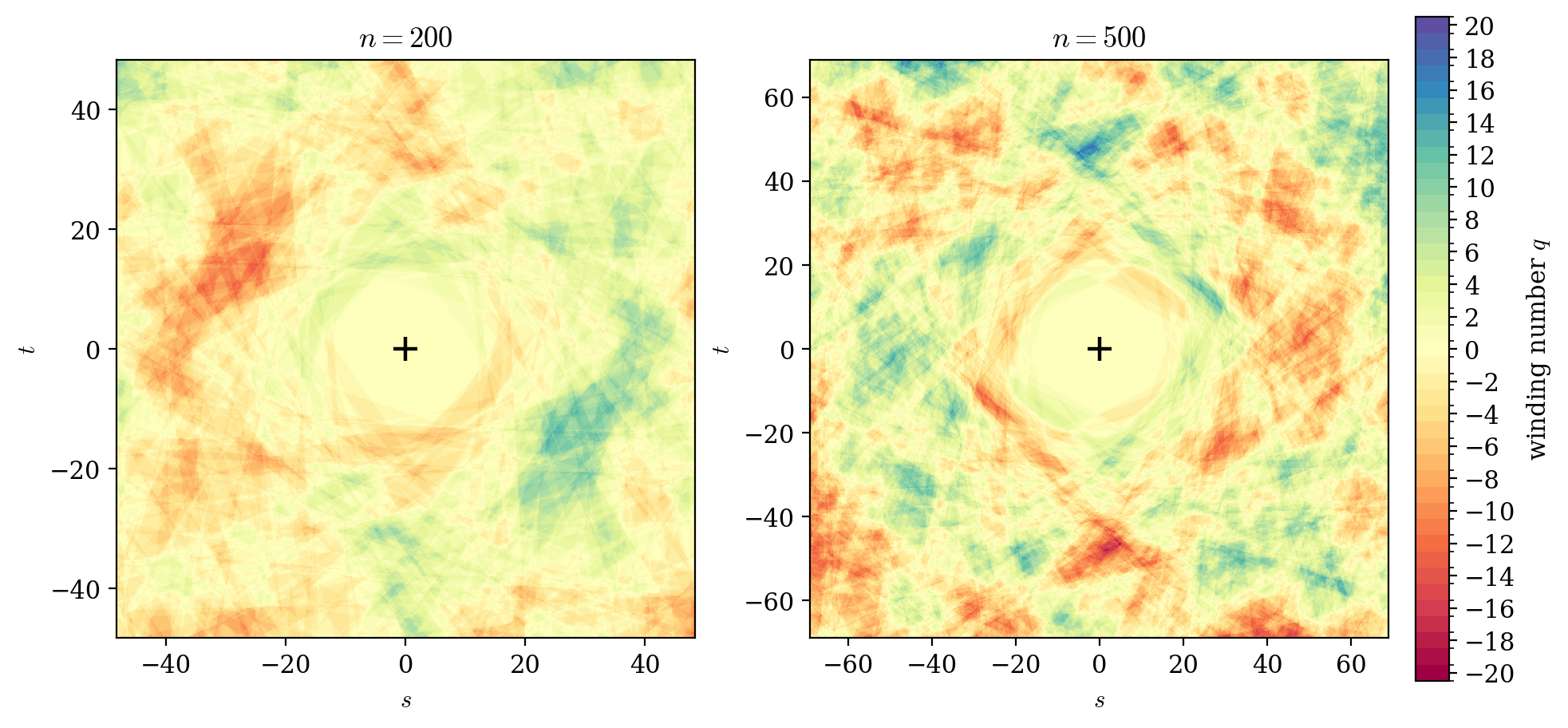}
\caption{Basins on a random 2D affine slice of $\T^n$ through the
$q=0$ twisted state (marked $+$):
$\bm\theta(s,t)=\bm\theta^{(0)}+s\bm v_1+t\bm v_2\pmod{2\pi}$, with
$\bm v_1,\bm v_2$ a random orthonormal pair orthogonal to the gauge
direction $\bm 1$; $n=200$ (left), $500$ (right). Each pixel is colored
by the winding number $I(\bm\theta(s,t))$. On the slice each
constraint $\eta_i=\pm\pi$ is a family of equally spaced parallel
lines, whose arrangement partitions the slice into convex cells of
constant winding number. The trace of each basin $\mathcal K_q$ on the slice is a generically disconnected union of such cells: the same color recurs wherever the
slice re-enters $\mathcal K_q$. Near the attractor the trace of
$\mathcal K_0$ is a single convex cell---the head. A random slice almost
surely avoids the adversarial directions of Theorem~\ref{thm:head}, so the visible cell
extends to distances of order $\sqrt{n/\log n}$ (Theorem~\ref{thm:head}(ii)) rather than the inscribed radius $\pi/\sqrt2$. As $n$ grows the tessellation refines and the fraction of the slice with winding number $0$ shrinks like $n^{-1/2}$ (Theorem \ref{thm:gaussian}). By Proposition~\ref{prop:invariance}
the coupling $f$ does not enter: the winding number, hence the basin
label, is fixed by the initial condition alone. The pictures are inspired by those in \cite{zhang2021basins}.}

\label{fig:slice_atractor}
\end{figure}

% {\bf Earlier analytical
% progress on the standard model~\cite{groisman2025size} had to first
% establish that trajectories enter the smaller invariant region
% $\cI=\{|\eta_i|<\pi/2\}$, a partially rigorous argument that
% controlled basin sizes but said nothing about basin geometry.
% Replacing $\sin$ by a function increasing on the full interval
% removes the analytical obstacle in one step and we need no
% waiting time, no $\log n$ bound, no approximation.}

\subsection{Relation to prior work}\label{sec:prior}

The study of the sync basin on the cycle was initiated by Wiley,
Strogatz, and Girvan~\cite{wiley2006size}, who conjectured the
Gaussian scaling $\mu(\cK_q)\sim e^{-kq^2}$ from numerical evidence for a system like \eqref{eq:model-intro} with $f$ replaced with the $\sin$ function. Delabays, Tyloo, and Jacquod~\cite{delabays2017size} measured basin
sizes through local sampling near each attractor and reported
exponential scaling $e^{-k|q|}$ instead. Zhang and
Strogatz~\cite{zhang2021basins} resolved the discrepancy by
recognizing that the basins are octopus-like: local sampling near the
attractor misses the tentacles entirely, and the tentacles carry
essentially all the volume. The same work reported the typical
distance distribution and the failure of hypercube approximations
that we prove here. Martiniani's PhD Thesis \cite{martiniani2017thesis} proposes a robust numerical protocol,
capable of producing numerical estimates of the total number of stable states and their volumes, for a broad
class of systems that exhibit similar geometry, with particular emphasis on two and three dimensional jammed packings.

On the analytical side, the partially rigorous
argument of~\cite{groisman2025size} established the Gaussian scaling
for the sine coupling via a three-step argument controlling the entry
time into $\cI$; the present paper both removes the
non-rigorous step (in the generalized model) and extends the analysis
from volumes to the full geometry. Static winding-cell
decompositions related to our invariant region appear
in~\cite{delabays2016multistability,jafarpour2022flow}. Our results support the universality picture put
forward in~\cite{zhang2021basins,casiulis2023when}: they constitute
its first rigorous instance.

\subsection{Organization of the paper}

Section~\ref{sec:results} introduces the model and hypotheses and
states the main results. Section~\ref{sec:invariance} proves
well-posedness and the winding-number invariance
(Proposition~\ref{prop:invariance}), and
Section~\ref{sec:stability} the stability of twisted states
(Corollary~\ref{cor:stability}). Sections~\ref{sec:volumes}--\ref{sec:head}
prove Theorems~\ref{thm:gaussian}--\ref{thm:head} in order.
Section~\ref{sec:discussion} discusses universality beyond the cycle
and open problems.

%% ============================================================
\section{The model and main results}\label{sec:results}
%% ============================================================

We consider $n$ identical phase oscillators arranged on a cycle
graph, with each oscillator coupled symmetrically to its two nearest
neighbors. The state of oscillator $i$ is its phase
$\theta_i\in\R/2\pi\Z$, and the system evolves according to \eqref{eq:model-intro}.
% \begin{equation}
% \dot\theta_i = f(\theta_{i+1}-\theta_i) + f(\theta_{i-1}-\theta_i),
% \qquad i=1,\dots,n,
% \label{eq:model}
% \end{equation}
Since~\eqref{eq:model-intro} is invariant under uniform phase shifts
$\bm\theta\mapsto\bm\theta+c\,\bm 1$, we fix this gauge by imposing
$\sum_i\theta_i=0$, a constraint preserved by the dynamics. The
coupling function $f\colon\R\to\R$ is assumed to satisfy:
\begin{itemize}
\item[(H1)] $f\in C^1(-\pi,\pi) \cap C[-\pi,\pi]$;
\item[(H2)] $f$ is odd: $f(-x)=-f(x)$;
%\item[(H3)] $f$ is $2\pi$-periodic;
\item[(H3)] $f'(x)>0$ for all $x\in(-\pi,\pi)$.
\end{itemize}
In the sequel we extend $f$ to the whole line $\R$ periodically.
%The extend function is discontinuous at $2k\pi$, $k\in \Z$.
Conditions (H1)--(H2) are shared with the standard Kuramoto coupling
$f=\sin$. Condition (H3), which $\sin$ fails on $(\pi/2,\pi)$, is
what changes the analysis decisively. Together (H2) and (H3) imply
that, after periodization, $f$ has simple jump discontinuities at the
odd multiples of $\pi$; this is harmless, as trajectories with
typical initial data never reach these points
(Proposition~\ref{prop:invariance}).

It is convenient to work with the phase-difference variables
$\eta_i=\theta_{i+1}-\theta_i$, $i=1,\dots,n$, indices mod $n$. In
these variables the system reads
\begin{equation}
\dot\eta_i = f(\eta_{i+1}) - 2f(\eta_i) + f(\eta_{i-1}),
\label{eq:eta-model}
\end{equation}
with $\sum_{i=1}^n\eta_i\in 2\pi\Z$ enforced by the periodic boundary
conditions. The right-hand side of~\eqref{eq:eta-model} is the
discrete Laplacian of $f(\bm\eta)$ on the cycle.

For trajectories that stay away from the discontinuity set, the
winding number
\begin{equation}
I({\bm \theta}(t)) = \frac{1}{2\pi}\sum_{i=1}^n \eta_i(t)
\label{eq:winding}
\end{equation}
is a well-defined integer. Since each $|\eta_i|/(2\pi)<1/2$, it can
equivalently be computed from any $n-1$ of the phase differences:
\begin{equation}
I({\bm \theta}(t)) = \Bigl[\frac{1}{2\pi}\sum_{i=1}^{n-1}\eta_i(t)\Bigr],
\label{eq:winding-integer-part}
\end{equation}
where $[x]$ denotes the integer closest to $x$. This second form is
the bridge to probability: drawing $\bm\theta(0)$ uniformly on $\T^n$
is equivalent to drawing $\eta_1,\dots,\eta_{n-1}$ i.i.d.\ uniformly
on $[-\pi,\pi]$, the constraint $\sum_i\eta_i\in2\pi\Z$ then
determining $\eta_n$.

Let
\begin{equation}
\cJ = \Bigl\{\bm\eta\in(-\pi,\pi)^n:\ \sum_i\eta_i\in 2\pi\Z\Bigr\}
\label{eq:J-def}
\end{equation}
denote the phase space restricted to configurations for which $I$ is
well-defined. The following proposition is the tool from which all four main theorems are derived.

\begin{proposition}[Winding-number invariance]\label{prop:invariance}
Under hypotheses \textup{(H1)--(H3)}, the region $\cJ$ is positively
invariant under the flow of~\eqref{eq:eta-model}: if
$\bm\eta(0)\in\cJ$, then $\bm\eta(t)\in\cJ$ for all $t\ge 0$.
Consequently the winding number $I({\bm \theta}(t))$ is constant along every
trajectory starting in $\cJ$. The set
$\{\bm\eta\in(-\pi,\pi]^n:\eta_i=\pi\text{ for some }i\}$ has
Lebesgue measure zero in $\T^n$. Consequently, the winding number is conserved for
all $t\ge 0$ for almost every initial condition in state space.
\end{proposition}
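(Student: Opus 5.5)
The plan is a maximum-principle argument for the discrete Laplacian dynamics \eqref{eq:eta-model} on the open box $(-\pi,\pi)^n$, followed by a continuity argument for the winding number.

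\textbf{Local well-posedness.} On the open set $(-\pi,\pi)^n$ the vector field $\bm\eta\mapsto \Delta f(\bm\eta)$ is $C^1$ by (H1), so Picard--Lindel\"of gives a unique maximal solution $\bm\eta(t)$, $t\in[0,T^*)$, staying in $(-\pi,\pi)^n$. The constraint $\sum_i\eta_i\in2\pi\Z$ is preserved, because
\[
\frac{d}{dt}\sum_i\eta_i=\sum_i\bigl(f(\eta_{i+1})-2f(\eta_i)+f(\eta_{i-1})\bigr)=0
\]
by telescoping on the cycle. So $\sum_i\eta_i(t)$ is constant and the solution stays in $\cJ$ as long as it exists. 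The bound $|\eta_i|<\pi$ rules out blow-up, so $T^*<\infty$ can only happen if some $|\eta_i(t)|\to\pi$.

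\textbf{Uniform a priori bound (main step).} Set $M(t)=\max_i f(\eta_i(t))$ and $m(t)=\min_i f(\eta_i(t))$. By (H2)--(H3), $f$ is a strictly increasing bijection from $[-\pi,\pi]$ onto $[-f(\pi),f(\pi)]$ with $f(\pi)>0$. Put $u_i=f(\eta_i)$. Then
\[
\dot u_i=f'(\eta_i)\,(u_{i+1}-2u_i+u_{i-1}).
\]
At an index $i$ achieving the maximum, the bracket is $\le0$ and $f'(\eta_i)>0$, so $\dot u_i\le 0$. The standard Danskin/Dini-derivative argument for a maximum of finitely many $C^1$ functions then gives $D^+M(t)\le 0$, so $M$ is nonincreasing. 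Similarly $m$ is nondecreasing. Hence for all $t<T^*$,
\[
\max_i\eta_i(t)\le f^{-1}(M(0))<\pi, \qquad \min_i\eta_i(t)\ge f^{-1}(m(0))>-\pi.
\]
The trajectory therefore stays in the compact set $[f^{-1}(m(0)),f^{-1}(M(0))]^n\subset(-\pi,\pi)^n$, which forces $T^*=\infty$ and proves positive invariance of $\cJ$. I expect this to be the only delicate point: the maximum is not differentiable, and $f'$ may vanish or blow up at $\pm\pi$. Working with the compact subbox avoids the boundary, and the Dini-derivative argument needs only the one-sided inequality at maximizing indices. The fallback, if the Dini argument needs care, is to prove $M(t)\le M(0)+\varepsilon t$ for every $\varepsilon>0$ by contradiction at a first crossing time, and then let $\varepsilon\to0$.

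\textbf{Winding number and measure-zero set.} Along the trajectory, $I=\frac1{2\pi}\sum_i\eta_i(t)$ is continuous and integer valued; in fact its derivative is identically zero by the telescoping identity, so it is constant. For the measure statement, the map from $\T^n$ modulo the gauge to $(\eta_1,\dots,\eta_{n-1})\in(-\pi,\pi]^{n-1}$ pushes Lebesgue measure to a multiple of Lebesgue measure, as noted after \eqref{eq:winding-integer-part}. The set $\{\eta_i=\pi\}$ for $i\le n-1$ is a hyperplane, hence null. The set $\{\eta_n=\pi\}$ is the set where $\sum_{i<n}\eta_i\equiv-\pi \pmod{2\pi}$, a finite union of hyperplanes intersected with the cube, also null. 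Off this null set the initial condition lies in $\cJ$, and the previous steps give conservation of the winding number for all $t\ge0$.
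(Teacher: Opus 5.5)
Your proof is correct, and it takes a different route from the paper, although both are maximum-principle arguments.

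The paper argues by contradiction at a putative first exit time $t_0$. Using the continuous extension of the field to $\overline{\cJ}$ (Lemma~\ref{lem:wellposed}), a component with $\eta_i(t_0)=\pi$ must have one-sided derivative $\dot\eta_i(t_0)\ge0$. The equation gives $\dot\eta_i(t_0)\le0$, with equality only if both neighbors also equal $\pi$. Propagating this around the cycle forces $\bm\eta(t_0)\equiv\pi$, which the paper then rules out by appealing to uniqueness of solutions.

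You instead show that $\max_i f(\eta_i)$ is nonincreasing and $\min_i f(\eta_i)$ is nondecreasing. Since $f$ is strictly increasing, this is the same as saying $\max_i\eta_i$ and $\min_i\eta_i$ are monotone. The orbit is therefore confined to the compact box $[f^{-1}(m(0)),f^{-1}(M(0))]^n\subset(-\pi,\pi)^n$, which gives global existence directly.

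What your route buys:
\begin{itemize}
\item It never touches the boundary, so it needs neither the boundary-extension lemma nor the final uniqueness step.
\item That uniqueness step is genuinely delicate under (H1). (H1) only asks $f$ to be continuous at $\pm\pi$, so it allows $f'$ to blow up there, and then the extended field need not be Lipschitz near $\bm\eta\equiv\pi$. The paper's argument could instead be closed by noting that $\sum_j\eta_j$ is conserved and is strictly less than $n\pi$ for $t<t_0$.
\item It gives quantitative information: an explicit invariant subbox determined by the initial extreme values.
\end{itemize}

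What the paper's route buys: it works directly with $\bm\eta$ and avoids Dini derivatives and Danskin's lemma, at the cost of a purely qualitative conclusion.

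Your treatment of the conservation of $\sum_i\eta_i$ matches the paper's. So does your treatment of the null set, including the hyperplanes $\sum_{i<n}\eta_i\equiv\pi \pmod{2\pi}$ that come from $\eta_n$.
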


Proposition~\ref{prop:invariance} reduces the dynamics to a question
about initial conditions: the basin $\cK_q$ of the $q$-twisted state
coincides, up to a Lebesgue-null set, with $\{\bm\theta(0)\in\T^n: I(\bm \theta(0))=q\}$ (see
Corollary~\ref{cor:stability} for the convergence statement). If
$\bm\theta(0)\sim\mu$, the uniform measure on $\T^n$, then $I(\bm \theta(0))$ is
the rounded sum~\eqref{eq:winding-integer-part} of i.i.d.\ uniform
random variables. Every result in this paper is a consequence of
this reduction. A first consequence combines flow-invariance with the gradient
structure of~\eqref{eq:model-intro}.

\begin{corollary}[Stability of twisted states]\label{cor:stability}
Under hypotheses \textup{(H1)--(H3)}, for every integer $q$ with
$|q|<n/2$ the $q$-twisted state is the unique attractor
of~\eqref{eq:model-intro} within the set $\cK_q$ of configurations with
winding number $q$: every trajectory starting in $\cK_q$ converges to
$\bm\theta^{(q)}$. In particular, all such twisted states are
stable.
\end{corollary}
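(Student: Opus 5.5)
Within $\cJ$ the dynamics is a gradient flow of $E_n(\bm\eta)=\sum_i F(\eta_i)$ with $F'=f$ and $F$ strictly convex on $(-\pi,\pi)$ by (H3). The plan is to restrict to the slice of $\cJ$ with winding number $q$, show that the energy is strictly convex there with a unique critical point, and then use a Lyapunov/LaSalle argument.

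\emph{Step 1: the slice is an open convex polytope with a unique critical point.} By Proposition~\ref{prop:invariance}, a trajectory starting in
\[
\cJ_q=\{\bm\eta\in(-\pi,\pi)^n:\textstyle\sum_i\eta_i=2\pi q\}
\]
stays in $\cJ_q$ for all $t\ge0$. The set $\cJ_q$ is an open convex polytope in the affine hyperplane $\sum_i\eta_i=2\pi q$, and it is nonempty exactly when $|q|<n/2$. The restriction of $E_n$ to $\cJ_q$ is strictly convex. Its critical points satisfy $f(\eta_i)=\lambda$ for all $i$, with $\lambda$ a Lagrange multiplier; this is consistent with \eqref{eq:eta-model}, whose equilibria make $f(\bm\eta)$ harmonic, hence constant, on the cycle. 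Since $f$ is injective on $(-\pi,\pi)$, all $\eta_i$ are equal, so $\eta_i=2\pi q/n$. This is the twisted state, and it is the unique minimizer of $E_n$ on $\cJ_q$.

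\emph{Step 2: convergence via LaSalle.} Along trajectories,
\[
\frac{d}{dt}E_n=-\sum_i(\dot\theta_i)^2\le 0,
\]
with equality only at equilibria. I then need precompactness of the orbit inside $\cJ_q$, and this is the main obstacle, because $\cJ_q$ is open and the trajectory might creep toward a face $|\eta_i|=\pi$. I would try three ways to keep the orbit away from the boundary.

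\begin{itemize}
\item The simplest is a maximum principle: the coordinate achieving $\max_i\eta_i$ has $\dot\eta_i\le0$, because $f$ is increasing and $\dot\eta_i$ is a discrete Laplacian of $f(\bm\eta)$. Hence $\max_i\eta_i(t)$ is nonincreasing and $\min_i\eta_i(t)$ is nondecreasing. The orbit therefore lies in the compact set $[\min_i\eta_i(0),\max_i\eta_i(0)]^n\cap\{\sum_i\eta_i=2\pi q\}\subset\cJ_q$.
\item Proposition~\ref{prop:invariance} may itself supply such a bound.
\item Otherwise, if $F$ is finite on $[-\pi,\pi]$, I would use the sublevel set $\{E_n\le E_n(\bm\eta(0))\}$ together with the maximum principle above.
\end{itemize}

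Given compactness, LaSalle's principle puts the $\omega$-limit set in the equilibrium set, which in $\cJ_q$ is the single point $\bm\eta^{(q)}$. So $\bm\eta(t)\to\bm\eta^{(q)}$, and with the gauge $\sum_i\theta_i=0$ this gives $\bm\theta(t)\to\bm\theta^{(q)}$. The conserved total phase fixes the shift $c$ in $\bm\theta^{(q)}$.

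\emph{Step 3: stability.} Lyapunov stability comes from strict convexity: the sublevel sets of $E_n$ restricted to $\cJ_q$ shrink to $\bm\eta^{(q)}$. Each such set is forward invariant, and for small enough levels it is a small neighborhood of $\bm\eta^{(q)}$ inside $\cJ_q$. Every nearby initial condition in $\cJ$ has winding number $q$, because $I$ is locally constant on $\cJ$. So $\bm\theta^{(q)}$ is stable, and by Step 2 it is asymptotically stable, modulo the neutral gauge direction. Finally, $\cK_q$, defined by convergence, coincides up to a null set with $\{I=q\}$: every point of $\{I=q\}\cap\cJ$ converges to $\bm\theta^{(q)}$, distinct $q$ give distinct limits, and the complement of $\cJ$ is null by Proposition~\ref{prop:invariance}.
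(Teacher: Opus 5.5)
Your proposal is correct and takes the paper's route. Strict convexity of $F$ plus Lagrange multipliers makes the twisted state the unique critical point on the slice $\sum_i\eta_i=2\pi q$, and LaSalle's principle with $E$ as Lyapunov function gives convergence. Your maximum-principle bound ($\max_i\eta_i$ nonincreasing, $\min_i\eta_i$ nondecreasing) supplies the precompactness inside $\cJ_q$ that the paper leaves implicit when it invokes LaSalle after noting only that $E$ is bounded below on $\overline\cJ$. Your Step 3 also makes the Lyapunov-stability claim explicit, which the paper does not.
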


Observe carefully that, when $n$ is even, for $q=n/2$ we obtain the unstable equilibrium ${\bm \eta}\equiv\pi$, as the reader can check. %The same occurs with states $\bm \theta$ with phase differences $\eta_i \in \{0,\pi\}$ for all $i$ but not all of them equal to zero.

In the standard Kuramoto model, by contrast, twisted states with
$n/4\le|q|<n/2$ are equilibria but unstable: the linearization has a
positive eigenvalue because $\cos(2\pi q/n)<0$ in this range.
%This ``stability gap''~\cite{wiley2006size} is sometimes presented as a
%generic feature of oscillator networks, but it is not: hypothesis
%(H3) eliminates it entirely. It is a special feature of the sine
%coupling near $\pm\pi/2$, where the derivative changes sign.

We can now state the four main theorems. Throughout, $\mu$ denotes
the uniform probability measure on $\T^n$ and $\cK_q$ the basin of
the $q$-twisted state. For the geometric statements we use on $\T^n$
the $\ell^2$ distance on coordinates normalized by $n$,
\begin{equation}
\bar d(\bm\theta,\bm\theta') = \Bigl(\frac1n\sum_{i=1}^n
d_i^2\Bigr)^{1/2},
\label{eq:distance}
\end{equation}
where $d_i\in[0,\pi]$ is the distance on $S^1$ between the $i-$th
coordinates of $\bm\theta$ and $\bm\theta'$, as
in~\cite{zhang2021basins}.

\begin{mainthm}[Gaussian basin volumes]\label{thm:gaussian}
Under hypotheses \textup{(H1)--(H3)}, as $n\to\infty$,
\begin{equation}
\mu(\cK_q) = \sqrt{\frac{6}{\pi n}}\,
\exp\Bigl(-\frac{6q^2}{n}\Bigr)\,(1+o(1)),
\label{eq:gaussian}
\end{equation}
uniformly for $q=O(\sqrt n)$.
\end{mainthm}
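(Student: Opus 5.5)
By Proposition~\ref{prop:invariance} and Corollary~\ref{cor:stability}, $\cK_q$ coincides up to a $\mu$-null set with $\{\bm\theta: I(\bm\theta)=q\}$. Under $\mu$ the differences $\eta_1,\dots,\eta_{n-1}$ are i.i.d.\ uniform on $[-\pi,\pi]$, and by \eqref{eq:winding-integer-part}
\[
\mu(\cK_q)=\PP\bigl(S_{n-1}\in(q-\tfrac12,q+\tfrac12)\bigr),\qquad S_{n-1}=\frac1{2\pi}\sum_{i=1}^{n-1}\eta_i .
\]
Here $S_{n-1}$ is a sum of $n-1$ i.i.d.\ uniform variables on $[-\tfrac12,\tfrac12]$, each with variance $1/12$, so $\operatorname{Var}S_{n-1}=(n-1)/12$. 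So the plan is to reduce to a local limit statement for an Irwin--Hall type sum, evaluated on a unit window.

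The heuristic answer comes from the local CLT. The density of $S_{n-1}$ is approximately $\sqrt{6/(\pi(n-1))}\,e^{-6x^2/(n-1)}$. Integrating over a window of length $1$ centred at $q$, and using that this density varies only by a relative $O(|q|/n+1/n)$ across the window when $q=O(\sqrt n)$, gives
\[
\mu(\cK_q)=\sqrt{\frac{6}{\pi n}}\,e^{-6q^2/n}(1+o(1)).
\]
Replacing $n-1$ by $n$ costs only a factor $1+O(1/n)$ in the prefactor, and changes the exponent by $6q^2/(n(n-1))=O(1/n)$ when $q=O(\sqrt n)$.

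To make this rigorous with a \emph{relative} error, which is needed because the target probability is of order $n^{-1/2}$, I would use the uniform local limit theorem for densities. The summand has bounded density and characteristic function $\phi(t)=\frac{\sin(t/2)}{t/2}$. The density $p_{n-1}$ of $S_{n-1}$ is given by Fourier inversion:
\[
p_{n-1}(x)=\frac1{2\pi}\int_{\R}\phi(t)^{n-1}e^{-itx}\,dt .
\]
I would split the integral into two regions.
\begin{itemize}
\item For $|t|\le n^{-1/2+\epsilon'}$, with small $\epsilon'>0$, expand $\log\phi(t)=-t^2/24+O(t^4)$. This yields the Gaussian density plus an error $O(n^{-3/2})$, uniformly in $x$.
\item For $|t|>n^{-1/2+\epsilon'}$, the bounds $|\phi(t)|\le e^{-ct^2}$ for $|t|\le 1$, $|\phi(t)|\le \rho<1$ for $|t|\ge 1$, and the integrability $|\phi|^{2}\in L^1$ make the tail integral $O(e^{-cn^{2\epsilon'}})$.
\end{itemize}
Hence $\sup_x\bigl|p_{n-1}(x)-g_{n-1}(x)\bigr|=O(n^{-3/2})$ (by symmetry the odd Edgeworth term vanishes, so this is even better than the generic $O(1/n)$). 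Integrating over the unit window then gives absolute error $O(n^{-3/2})$.

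This is negligible relative to the main term $\asymp n^{-1/2}e^{-6q^2/n}$, which is bounded below by $c\,n^{-1/2}$ when $|q|\le C\sqrt n$. So the relative error is $O(n^{-1})$, uniformly over that range. This uniformity in $q$ is the only delicate point; it holds precisely because $e^{-6q^2/n}$ stays bounded below for $q=O(\sqrt n)$.

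The remaining work is the measure-theoretic bookkeeping. I need to verify that the null set in Proposition~\ref{prop:invariance} does not matter, and that the map $\bm\theta\mapsto(\eta_1,\dots,\eta_{n-1})$ pushes $\mu$ to the uniform product measure. The latter holds because, after fixing $\theta_n$, the differences form a measure-preserving translation on the torus.

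I expect the main obstacle to be the uniform local CLT with relative error across the whole range $q=O(\sqrt n)$. Once it is established via the characteristic-function estimate above, the rest is immediate.
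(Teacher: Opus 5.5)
Your proposal is correct and follows the paper's proof. Via Proposition~\ref{prop:invariance} and Corollary~\ref{cor:stability}, both you and the paper reduce to $\mu(\cK_q)=\PP([S_{n-1}]=q)$ for a sum of $n-1$ i.i.d.\ uniforms on $(-\tfrac12,\tfrac12]$, apply the local limit theorem for densities, and integrate the Gaussian density over the unit window. The only difference is in that local limit step: you prove it yourself by Fourier inversion, with an explicit $O(n^{-3/2})$ error for the unnormalized density, while the paper cites the qualitative $\sup_x|p_n-\varphi|\to0$ from Petrov; the cited version already suffices because the main term is bounded below by $c\,n^{-1/2}$ when $q=O(\sqrt n)$, and your rate would additionally extend the uniformity to $|q|\le c\sqrt{n\log n}$ for any $c<1/\sqrt6$.
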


Theorem~\ref{thm:gaussian} identifies the constant in the
Wiley--Strogatz--Girvan scaling explicitly, $k=6/n$, and shows it is
universal across the entire family of admissible couplings: basin
volumes depend on the conservation law and the uniform measure, not
on $f$ itself (Remark~\ref{rem:no-finetuning}).

\begin{mainthm}[Typical distance distribution]\label{thm:master}
Fix any $\bm\theta^*\in\T^n$ and sample $\bm\theta$ uniformly on
$\T^n$. Then as $n\to\infty$,
\begin{equation}
\bar d(\bm\theta,\bm\theta^*)\ \to \
\sqrt{\pi^2/3}\ \approx\ 1.814, \quad \mu-\text{almost surely,}
\label{eq:master}
\end{equation}
with Gaussian fluctuations of order $n^{-1/2}$. Moreover,
conditioning on $\bm\theta\in\cK_q$ does not change the limit, for
any $q$ with $|q|=o(\sqrt n)$.
\end{mainthm}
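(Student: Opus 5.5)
Since $\bm\theta^*$ is fixed and $\mu$ is invariant under translations of $\T^n$, the vector $\bm\theta-\bm\theta^*$ is again uniform on $\T^n$. Hence the coordinate distances $d_i$ are i.i.d.\ and uniform on $[0,\pi]$: the circle distance of a uniform point to a fixed point is uniform on $[0,\pi]$. We then have $\bar d^2=\frac1n\sum_i d_i^2$, an average of i.i.d.\ bounded variables with mean $\E d_1^2=\frac1\pi\int_0^\pi x^2\,dx=\pi^2/3$ and variance $\sigma^2=\E d_1^4-(\pi^2/3)^2=\pi^4/5-\pi^4/9=4\pi^4/45$.

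\textbf{Concentration.} The statement ``$\mu$-almost surely as $n\to\infty$'' must be interpreted on a common probability space. We realize all $n$ at once through an infinite i.i.d.\ sequence of uniform phases, i.e.\ the product measure on $\T^{\mathbb N}$, with $\bm\theta^{*}$ (possibly $n$-dependent) only shifting the coordinates. Alternatively, since the variables are bounded, Hoeffding gives $\mu(|\bar d^2-\pi^2/3|>\varepsilon)\le 2e^{-c\varepsilon^2 n}$, which is summable in $n$. By Borel--Cantelli this yields almost sure convergence for any coupling, including when $\bm\theta^*$ varies with $n$. The map $x\mapsto\sqrt x$ is continuous, so $\bar d\to\sqrt{\pi^2/3}$. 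For the fluctuations, the CLT gives $\sqrt n(\bar d^2-\pi^2/3)\Rightarrow N(0,\sigma^2)$. The delta method with $g(x)=\sqrt x$ then gives $\sqrt n(\bar d-\sqrt{\pi^2/3})\Rightarrow N\bigl(0,\sigma^2/(4\pi^2/3)\bigr)=N(0,\pi^2/15)$.

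\textbf{Conditioning on $\cK_q$.} This is the step needing care. By Proposition~\ref{prop:invariance} and Corollary~\ref{cor:stability}, $\cK_q$ coincides up to null sets with $\{I(\bm\theta)=q\}$, and by Theorem~\ref{thm:gaussian} $\mu(\cK_q)\ge c\,n^{-1/2}$ for $|q|=O(\sqrt n)$. The crude route already suffices for concentration:
\[
\mu\bigl(|\bar d^2-\pi^2/3|>\varepsilon \,\big|\, \cK_q\bigr)\le \frac{2e^{-c\varepsilon^2 n}}{c\, n^{-1/2}},
\]
which is still summable in $n$. So the limit is unchanged under conditioning, even for $q=O(\sqrt n)$.

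For the Gaussian fluctuations under conditioning we need more. Here I would use a local limit theorem for the pair
\[
\Bigl(\textstyle\sum_{i<n}\eta_i,\ \sum_i(d_i^2-\pi^2/3)\Bigr),
\]
where $\eta_i=\theta_{i+1}-\theta_i$ and $d_i$ is computed relative to $\bm\theta^*$. These two statistics are built from different objects, the increments $\eta_i$ and the coordinates $\theta_i$. To get asymptotic independence, I would note that $\sum_{i<n}\eta_i=\theta_n-\theta_1$ modulo unwrapping: the sum of the principal-value increments equals $\theta_n-\theta_1$ plus $2\pi$ times a count of wraps. Each wrap count depends only on consecutive pairs, so the pair above is a sum over a $1$-dependent stationary sequence of functions of $(\theta_i,\theta_{i+1})$. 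The cross-covariance term is $\E[\eta_1(d_1^2+d_2^2)]=0$ by the reflection symmetry $\bm\theta-\bm\theta^*\mapsto-(\bm\theta-\bm\theta^*)$, under which $\eta$ is odd and $d$ is even. A bivariate local/integral CLT for $m$-dependent sequences then shows that the event $\{[\frac1{2\pi}\sum_{i<n}\eta_i]=q\}$, which has probability of order $n^{-1/2}$, is asymptotically independent of the normalized $\sum_i(d_i^2-\pi^2/3)$ when $q=o(\sqrt n)$.

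I expect this local limit step to be the main obstacle, specifically verifying non-lattice conditions and uniformity in $q$. If it proves delicate, I would restrict the ``Gaussian fluctuations'' claim under conditioning to concentration at rate $O_p(n^{-1/2}\sqrt{\log n})$, which follows from the Hoeffding/Theorem~\ref{thm:gaussian} argument above.
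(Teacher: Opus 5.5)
Your proof is correct. For the unconditional part it is the paper's argument: reduce to $\bm\theta^*=\bm 0$, note the $d_i$ are i.i.d.\ uniform on $[0,\pi]$, then apply the law of large numbers and the CLT. Your Hoeffding--Borel--Cantelli version is slightly more robust than the paper's appeal to the strong law. It gives almost sure convergence under any coupling of the different $n$, and it handles $n$-dependent centers such as $\bm\theta^{(q)}$. Your limiting variance $\pi^2/15$ is right.

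The genuine difference is in the conditional claim. The paper appeals, only in outline, to a bivariate local limit theorem for $\bigl(S_{n-1},\sum_i d_i^2\bigr)$. You instead divide the exponentially small unconditional deviation probability by the lower bound $\mu(\cK_q)\ge c\,n^{-1/2}$ from Theorem~\ref{thm:gaussian}. Your argument is complete and elementary. It is unaffected by the fact that translating $\bm\theta^*$ to $\bm 0$ does not preserve $\cK_q$, and it covers the larger range $|q|=O(\sqrt n)$. What the local limit route would buy, if written out, is the stronger assertion in the paper's proof: that the conditional fluctuations are still Gaussian with the same variance. Your bound gives only $O_p(\sqrt{\log n/n})$ there.

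In your sketch of that stronger step, one justification is wrong. The reflection $\bm\theta-\bm\theta^*\mapsto-(\bm\theta-\bm\theta^*)$ sends $\eta_i$ to $2c_i-\eta_i \pmod{2\pi}$, where $c_i$ are the phase differences of $\bm\theta^*$. So $\eta_i$ is odd only if $c_i\in\{0,\pi\}$; it fails, for example, at a twisted state with $q\neq0$. The cross-covariance vanishes anyway, for a different reason. Given $\theta_i$ alone, or $\theta_{i+1}$ alone, $\eta_i$ is uniform on $(-\pi,\pi]$. Hence $\eta_i$ is independent of each $d_j$ individually, and $\operatorname{Cov}\bigl(\sum_{i<n}\eta_i,\sum_j d_j^2\bigr)=0$ exactly.

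Also, for non-constant $\bm\theta^*$ the summands $(\eta_i,d_i^2)$ are $1$-dependent but not stationary, since their joint law depends on $c_i$. You would therefore need a local limit theorem for $1$-dependent triangular arrays. Neither point affects the theorem as stated.
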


The implication is immediate and counterintuitive: as $n\to\infty$,
the basin of any $q$-twisted state with $|q|=o(\sqrt n)$ lies, up to
vanishing measure, on a sphere of radius $\approx 1.814$ centered at
the attractor. The ``head'' of the basin carries no measure in the
limit; the bulk concentrates at the typical distance to \emph{any}
fixed point of state space, distributed across $\T^n$ like any set of
comparable measure. A complementary statement
(Proposition~\ref{prop:boundary}) shows that a uniform sample lies
arbitrarily close to the boundaries of \emph{many} basins
simultaneously, which explains why local sampling both misses the
bulk and misidentifies what it catches.

\begin{mainthm}[Equidistribution along random rays]\label{thm:tentacles}
Fix $n\ge 3$, an integer $q$, and the $q$-twisted state
$\bm\theta^{(q)}$. Let $\bm v$ be uniform on $S^{n-1}$. Then for
almost every direction $\bm v$, the ray
$\{\bm\theta^{(q)}+\lambda\bm v:\lambda\ge0\}$ is equidistributed on
$\T^n$ with respect to $\mu$. In particular, for every $q'$ with
$\mu(\cK_{q'})>0$,
\begin{equation}
\lim_{T\to\infty}\frac1T\int_0^T
\mathbf 1_{\cK_{q'}}\bigl(\bm\theta^{(q)}+\lambda\bm v\bigr)\,d\lambda
= \mu(\cK_{q'}),
\label{eq:equidist}
\end{equation}
and the ray enters and exits each such $\cK_{q'}$ infinitely many
times.
\end{mainthm}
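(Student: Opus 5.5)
The argument is the classical Weyl equidistribution theorem for linear flows on the torus, followed by a boundary-regularity argument for the basins. The line $\lambda\mapsto\bm\theta^{(q)}+\lambda\bm v \pmod{2\pi}$ is a translate of the linear flow with velocity $\bm v$ on $\T^n=\R^n/2\pi\Z^n$.

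\emph{Step 1: Weyl's criterion.} For a continuous trigonometric polynomial $g(\bm\theta)=e^{i\langle\bm k,\bm\theta\rangle}$ with $\bm k\in\Z^n\setminus\{0\}$,
\[
\frac1T\int_0^T g(\bm\theta^{(q)}+\lambda\bm v)\,d\lambda
=e^{i\langle\bm k,\bm\theta^{(q)}\rangle}\,\frac{e^{iT\langle\bm k,\bm v\rangle}-1}{iT\langle\bm k,\bm v\rangle},
\]
which tends to $0$ whenever $\langle\bm k,\bm v\rangle\neq0$. The exceptional set $\bigcup_{\bm k\neq0}\{\bm v\in S^{n-1}:\langle\bm k,\bm v\rangle=0\}$ is a countable union of great subspheres, so it has measure zero for the uniform measure on $S^{n-1}$. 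For every other $\bm v$ the time averages of all characters converge to their $\mu$-integrals. By density of trigonometric polynomials in $C(\T^n)$ (Stone--Weierstrass), the same holds for all continuous $g$. This gives equidistribution. The base point is irrelevant; in particular the claim holds for any twisted state, and $n\ge3$ plays no role here.

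\emph{Step 2: passing to indicators.} Equidistribution extends from continuous functions to indicators of Jordan-measurable sets, i.e.\ sets whose topological boundary is $\mu$-null. For such a set we sandwich $\mathbf 1_A$ between continuous $g_-\le\mathbf 1_A\le g_+$ with $\int(g_+-g_-)\,d\mu<\varepsilon$, which is possible because $\mu(\partial A)=0$.

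This is the main obstacle, because $\cK_{q'}$ is defined dynamically. By Proposition~\ref{prop:invariance} and Corollary~\ref{cor:stability}, $\cK_{q'}$ coincides up to a null set with the open set
\[
U_{q'}=\{\bm\theta:\ |\eta_i|<\pi\ \forall i,\ I(\bm\theta)=q'\}.
\]
Its boundary lies in the finite union of hyperplanes $\{\eta_i=\pi\}$ (mod $2\pi$), which is closed and null; call this union $Z$. I would therefore use $U_{q'}$ itself and note that
\[
U_{q'}\subseteq \cK_{q'}\subseteq U_{q'}\cup Z .
\]
Lower and upper time averages of $\mathbf 1_{\cK_{q'}}$ are then bounded below by those of $\mathbf 1_{U_{q'}}$ and above by those of $\mathbf 1_{\overline{U_{q'}}}$. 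Both have limit $\mu(U_{q'})=\mu(\cK_{q'})$, since $\overline{U_{q'}}\setminus U_{q'}\subseteq Z$ is null. This yields \eqref{eq:equidist} without needing to know exactly how the flow behaves on $Z$.

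\emph{Step 3: infinitely many entries and exits.} Suppose $\mu(\cK_{q'})>0$. The average in \eqref{eq:equidist} is then positive, so the ray spends unbounded total time in $\cK_{q'}$. Since $\cK_{q'}\neq\T^n$, the complement also has positive measure (another basin, or a null-set argument using $\sum_{q'}\mu(\cK_{q'})=1$ with at least two positive terms for $n\ge3$, e.g.\ $q'=0,\pm1$). Hence the ray also spends unbounded time outside $\cK_{q'}$.

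The ray meets $Z$ in a discrete set of parameters: $\lambda\mapsto\eta_i$ is affine with slope $v_{i+1}-v_i$, which is nonzero for a.e.\ $\bm v$. Between consecutive hitting times of $Z$ the ray stays in a single open cell of constant winding number. It follows that membership in $\cK_{q'}$ alternates infinitely often, i.e.\ the ray enters and exits $\cK_{q'}$ infinitely many times.
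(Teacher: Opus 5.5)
Your proposal is correct and follows the paper's proof essentially step for step. It uses Weyl's criterion off the null countable union of subspheres $\{\langle\bm k,\bm v\rangle=0\}$, then passes to $\mathbf 1_{\cK_{q'}}$ because $\cK_{q'}$ differs from the open cell $U_{q'}$ only inside the closed null set $Z$, and then concludes from infinite time spent inside and outside. Your checks that the complement has positive measure for $n\ge3$, and that the crossing times along the ray are discrete, make explicit what the paper leaves implicit.

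One small repair is needed. You have only shown $\cK_{q'}\subseteq U_{q'}\cup Z$, not $\cK_{q'}\subseteq\overline{U_{q'}}$. So take the upper bound from the indicator of $U_{q'}\cup Z$ instead. This set is closed because $\overline{U_{q'}}\subseteq U_{q'}\cup Z$, and it has measure $\mu(U_{q'})$. Equivalently, $\partial\cK_{q'}\subseteq Z$, which is exactly how the paper phrases the $\mu$-continuity of the basin.
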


\begin{figure}[t!]
\centering
\includegraphics[width=.95\textwidth]{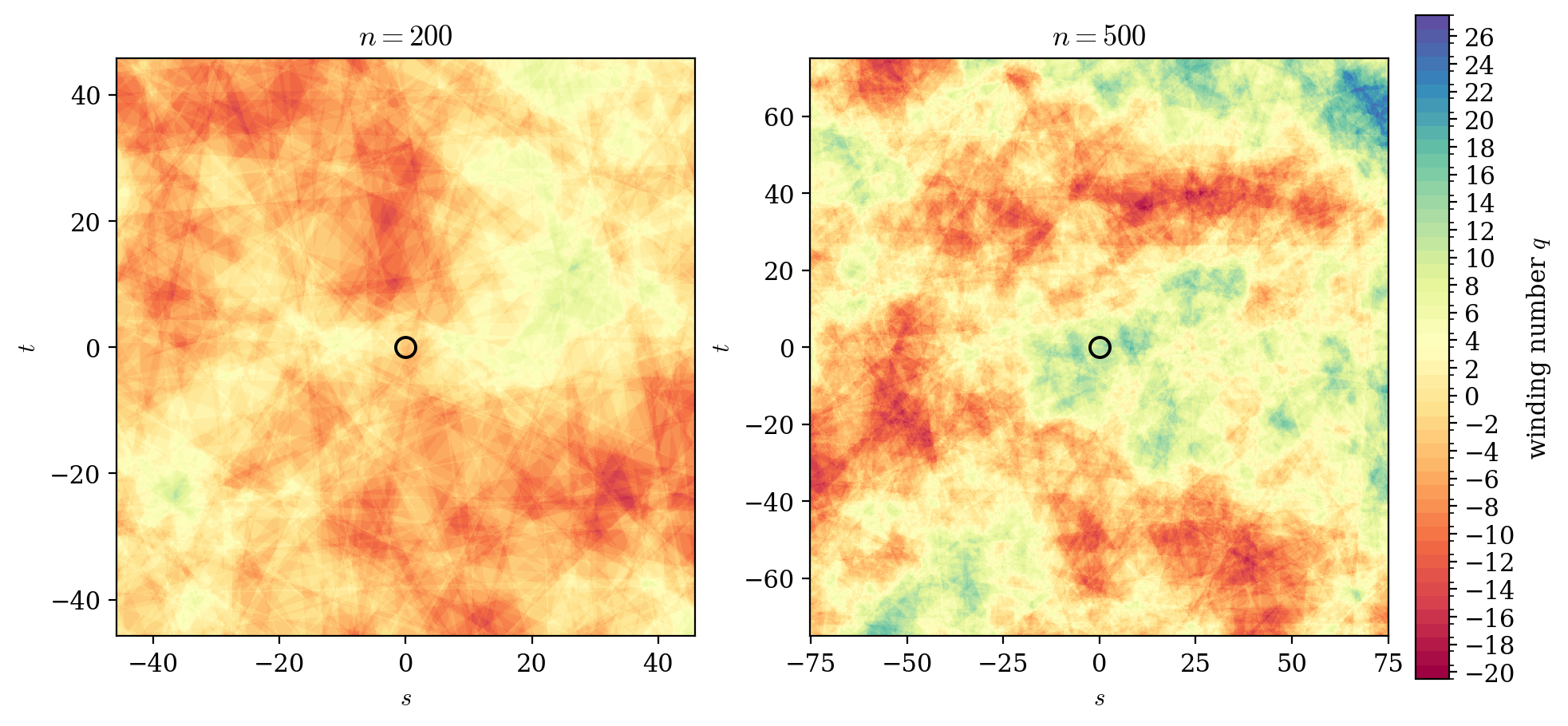}
\caption{The same construction as Figure~\ref{fig:slice_atractor} but through a
\emph{uniformly random} point $\boldsymbol\theta_0$ rather than a twisted
state: $\boldsymbol\theta(s,t)=\boldsymbol\theta_0+s\bm v_1+t\bm v_2
\pmod{2\pi}$, with $\bm v_1,\bm v_2$ a random orthonormal pair;
$n=200$ (left), $500$ (right). Each pixel is colored by the winding number
$I(\boldsymbol\theta(s,t))$. As in Figure~\ref{fig:slice_atractor}, the constraints
$\eta_i=\pm\pi$ cut the slice into convex cells of constant winding number,
and the trace of each basin $\mathcal K_q$ on the slice is a generically disconnected union of such cells, so the same color recurs across the slice. Unlike
Figure~\ref{fig:slice_atractor}, the center is now a generic point: it sits in an
ordinary cell with no distinguished head, and the winding number near it is
its own value $I(\boldsymbol\theta_0)$ rather than $0$. The contrast between
the two figures is the point---the convex head is a feature of the attractors,
not of the ambient space, and a uniform sample lies close to the boundaries
of many basins at once (Proposition~\ref{prop:boundary}). In both images, basins with winding number of order up to $\sqrt n$ are captured, as predicted by Theorem \ref{thm:gaussian}. These pictures also mimic those in \cite{zhang2021basins}.}
\label{fig:slice_random}
\end{figure}

Combining Theorems~\ref{thm:gaussian} and~\ref{thm:tentacles}, the
ray spends a fraction $\sqrt{6/(\pi n)}\,e^{-6q'^2/n}$ of its length
in each basin $\cK_{q'}$; by Corollary~\ref{cor:stability} every
$\cK_{q'}$ with $|q'|<n/2$ has positive measure, so the ray from any
attractor threads through every other basin infinitely often. The
tentacles are not metaphor but quantitative: along a generic line
through any twisted state, every basin is visited with frequency
equal to its volume. This upgrades the ``rays cross many basins''
observation of~\cite{zhang2021basins} to a precise frequency law.

Theorem~\ref{thm:tentacles} does not say at what distance the ray
first leaves the starting basin---the size of the ``head''. Two
natural quantities capture it: the inscribed radius
\begin{equation*}
R_q = \sup\bigl\{r>0:\ B(\bm\theta^{(q)},r)\subset\cK_q\bigr\},
\end{equation*}
where $B$ denotes the Euclidean ball, and, for $\bm v\in S^{n-1}$,
the first-crossing distance
\begin{equation*}
\lambda^*(\bm v) = \inf\bigl\{\lambda>0:\
\bm\theta^{(q)}+\lambda\bm v\notin\cK_q\bigr\},
\end{equation*}
so that $R_q=\inf_{\|\bm v\|=1}\lambda^*(\bm v)$.

\begin{mainthm}[Head size]\label{thm:head}
Let $\bm v$ be uniform on $S^{n-1}$. Then,
\begin{enumerate}
\item[(i)] For $|q|<n/2$, $R_q=\dfrac{\pi-|2\pi q/n|}{\sqrt2}=\dfrac{\pi}{\sqrt2}\Bigl(1-\dfrac{2|q|}{n}\Bigr)$;
\item[(ii)] For $|q|=o(n)$, as $n\to\infty$, $\lambda^*(\bm v)\sqrt{\log n/n}\to\pi/2$, in probability.
\end{enumerate}
\end{mainthm}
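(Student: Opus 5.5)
The plan is to reduce both parts to an elementary computation on the phase differences, using Proposition~\ref{prop:invariance} and Corollary~\ref{cor:stability}. Up to a Lebesgue-null set, $\cK_q$ is $\{\bm\eta\in(-\pi,\pi)^n: \sum_i\eta_i=2\pi q\}$. Along the line $\bm\theta^{(q)}+\lambda\bm v$ the differences are
\[
\eta_i(\lambda)=\frac{2\pi q}{n}+\lambda w_i,\qquad w_i=v_{i+1}-v_i ,
\]
read modulo $2\pi$ in $(-\pi,\pi]$.

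\textbf{First exit from the head.} For $\lambda$ small all $|\eta_i(\lambda)|<\pi$, so the point lies in $\cK_q$. This persists until the first $\lambda$ at which some $\eta_i$ reaches $\pm\pi$. At that moment the reduced coordinate jumps from $\pm\pi$ to $\mp\pi$, so the winding number $I$ changes by $\mp1$ and the point leaves $\cK_q$ immediately afterwards. Simultaneous crossings of two edges, which could cancel the change, only occur for a null set of directions; for the infimum in (i) a crossing of a single edge suffices. Hence
\[
\lambda^*(\bm v)=\min_{i:\,w_i\neq0}\frac{\pi-\operatorname{sgn}(w_i)\,2\pi q/n}{|w_i|}.
\]

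\textbf{Part (i), upper bound on $R_q$.} Take $q\ge0$ by symmetry and pick an edge $i$. Set $\bm v=(\bm e_{i+1}-\bm e_i)/\sqrt2$, which is orthogonal to $\bm 1$ as the gauge requires. Then $w_i=\sqrt2$, $w_{i\pm1}=-1/\sqrt2$ and all other $w_j=0$. The edge $i$ exits at $\lambda=(\pi-2\pi q/n)/\sqrt2$. The neighbouring edges exit only later, at $\sqrt2(\pi+2\pi q/n)$. Hence $R_q\le(\pi-2\pi|q|/n)/\sqrt2$.

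\textbf{Part (i), lower bound on $R_q$.} Lift to $\R^n$. If $\|\bm d\|<r$, then $|d_{i+1}-d_i|\le\sqrt{d_i^2+d_{i+1}^2}\,\sqrt2<\sqrt2\,r$. So every $\eta_i$ stays in $(2\pi q/n-\sqrt2 r,\,2\pi q/n+\sqrt2 r)\subset(-\pi,\pi)$ whenever $\sqrt2 r\le\pi-2\pi|q|/n$. By convexity of this cell, the whole ball lies in $\cK_q$. This gives equality in (i).

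\textbf{Part (ii), reduction to a Gaussian maximum.} Write $\bm v=\bm g/\|\bm g\|$ with $\bm g$ a standard Gaussian vector. If the gauge is imposed, project onto $\bm 1^\perp$; this changes nothing, since the $w_i$ are invariant under adding multiples of $\bm 1$. Then $\|\bm g\|/\sqrt n\to1$ in probability. Since $q=o(n)$, every numerator in the formula for $\lambda^*$ equals $\pi(1+o(1))$ uniformly in $i$. Therefore
\[
\lambda^*(\bm v)=\frac{\pi(1+o(1))\sqrt n}{M_n},\qquad M_n=\max_i|g_{i+1}-g_i|,
\]
and it remains to show $M_n/(2\sqrt{\log n})\to1$ in probability. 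Each $g_{i+1}-g_i$ is $N(0,2)$.

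\textbf{Upper bound for $M_n$.} A union bound gives $\PP(M_n>(1+\varepsilon)2\sqrt{\log n})\le 2n\,e^{-(1+\varepsilon)^2\log n}\to0$.

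\textbf{Lower bound for $M_n$.} The sequence $(g_{i+1}-g_i)$ is $1$-dependent and stationary. Restrict to the edges $i=1,3,5,\dots$, which give about $n/2$ independent $N(0,2)$ variables. The probability that all of them are below $(1-\varepsilon)2\sqrt{\log n}$ is at most
\[
\bigl(1-c\,n^{-(1-\varepsilon)^2}/\sqrt{\log n}\bigr)^{n/2}\to0 .
\]

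Combining the two bounds gives $\lambda^*\sqrt{\log n/n}\to\pi/2$ in probability. I expect the main care to be needed in the exit characterization, namely that crossing the first face really leaves $\cK_q$ and not merely the convex cell. The extreme-value step is routine once the odd-edge decoupling trick is used.
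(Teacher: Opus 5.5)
Your proof is correct and has the same skeleton as the paper's: the wall-crossing formula for $\lambda^*$, the bound $|w_i|\le\sqrt2\,\|\bm v\|$ for (i), and the Gaussian representation $\bm v=\bm g/\|\bm g\|_2$, which reduces (ii) to showing $\max_i|g_{i+1}-g_i|/(2\sqrt{\log n})\to1$. The differences are in how the individual steps are justified.

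\begin{itemize}
\item \textbf{Exit characterization.} The paper simply asserts that the first wall hit is the exit time from $\cK_q$. You prove it: the unreduced sum $\sum_i\eta_i(\lambda)=2\pi q$ is constant along the ray, so a single crossing sends the point into $\cJ\cap\{I=q\mp1\}\subset\cK_{q\mp1}$. Cancelling simultaneous crossings are confined to a null set of directions.
\item \textbf{Part (i).} The paper minimizes the formula term by term over the whole sphere, implicitly using it for every $\bm v$. This is harmless, since the formula is always a lower bound for the true exit time. You instead take one explicit extremal direction for the upper bound, and show the ball is contained in $\cJ\cap\{I=q\}$ for the lower bound. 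This needs the formula only for a direction in which a single edge crosses first. What does the work in the lower bound is that each point of the ball lies in $\cJ\cap\{I=q\}$; convexity of the cell is not needed.
\item \textbf{Extreme-value step.} The paper invokes classical theory for stationary Gaussian sequences under Berman's condition and glosses over the cyclic wrap-around. Your union bound, together with independence of the $\lfloor n/2\rfloor$ disjoint differences $g_{2k}-g_{2k-1}$, is self-contained, handles the wrap-around trivially, and gives explicit rates. The price is that it delivers only the first-order law, not the finer Gumbel-scale fluctuations the cited theory would also provide.
\end{itemize}
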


The basin therefore has an inscribed ball of radius
$R_q=\tfrac{\pi}{\sqrt2}(1-2|q|/n)$---bounded in $n$ and maximal at $q=0$,
where it equals $\pi/\sqrt2$---but extends a factor $\sqrt{n/\log n}$ farther
along a typical ray: the head is strongly anisotropic, narrow in a few
adversarial directions and thicker in most generic ones (Figure \ref{fig:slice_adversarial}), a non-star-shaped
region whose tentacles reach out to the typical distance $\sqrt{\pi^2/3}\approx1.814$
of Theorem~\ref{thm:master}.

\begin{remark}[Sharpness of the regime in
Theorem~\ref{thm:gaussian}]\label{rem:tail-regime}
The estimate~\eqref{eq:gaussian} is sharp on the regime that carries
essentially all the mass: by Chebyshev's inequality,
$\PP(|I(\bm \theta(0))|>C\sqrt n)\to0$ as $C\to\infty$. Stable twisted states
exist throughout $|q|<n/2$ (Corollary~\ref{cor:stability}), but only
those with $|q|=O(\sqrt n)$ are reached from random initial
conditions in any meaningful sense: even the union of all basins with
$|q|\gg\sqrt n$ has vanishing measure.
\end{remark}

\begin{remark}[No fine-tuning]\label{rem:no-finetuning}
The Gaussian scaling is a property of the cycle topology and of the
uniform initial measure on $\T^n$, not of any particular feature of
$f$ beyond (H1)--(H3). Different choices of $f$ yield the same basin volumes; the
constant $6/n$ in the exponent is universal across the family. The
dependence on $f$ enters only through the within-basin dynamics, not
through the basin volumes or their geometry.
\end{remark}

\begin{figure}[t!]
\centering
\includegraphics[width=.95\textwidth]{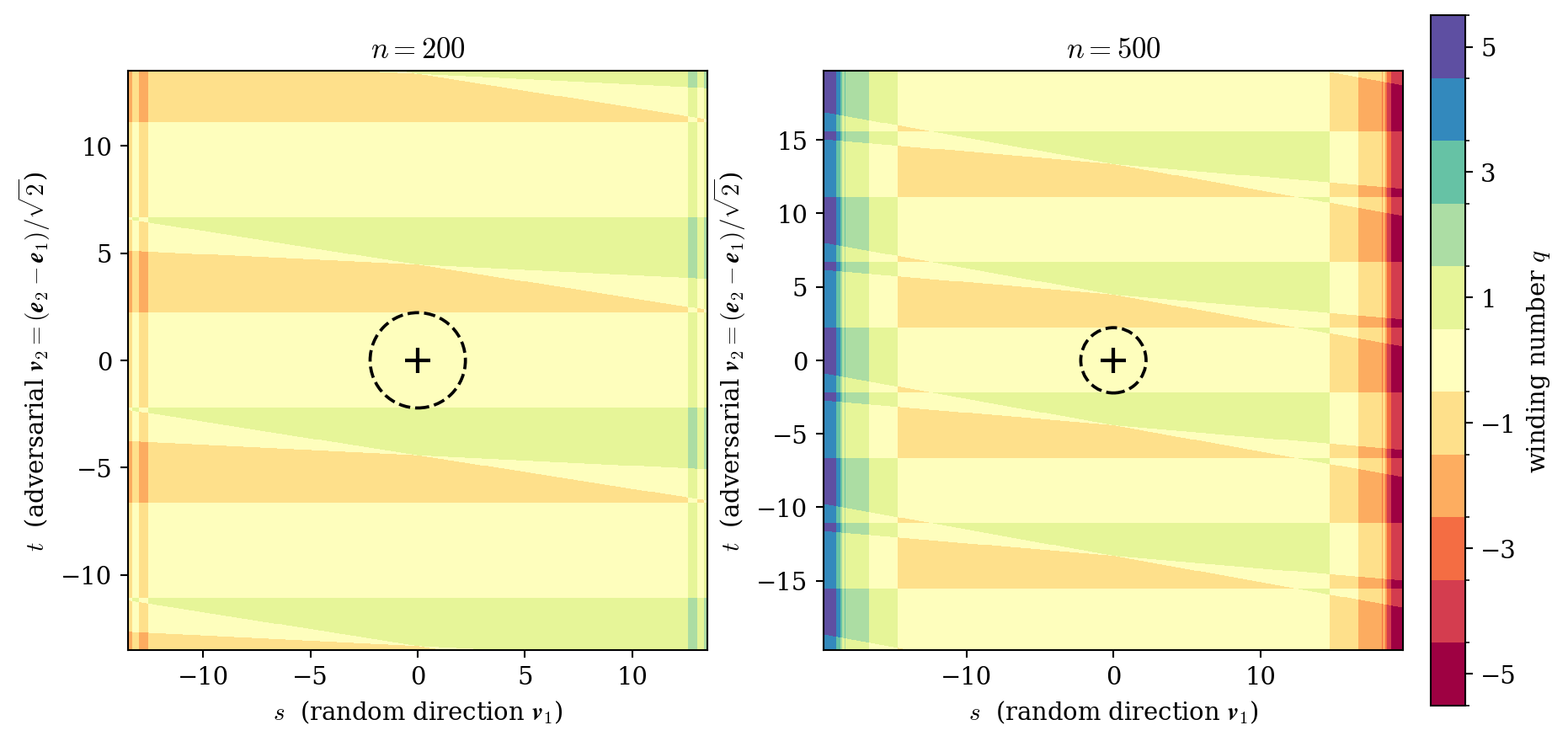}
\caption{Anisotropy of the head, on a slice through the $q=0$ twisted state
(marked $+$) spanned by one random direction and one adversarial direction:
$\boldsymbol\theta(s,t)=s\bm v_1+t\bm v_2\pmod{2\pi}$, with
$\bm v_1$ random (orthogonal to $\bm v_2$ and to the gauge direction
$\bm 1$) and $\bm v_2=(\bm e_2-\bm e_1)/\sqrt2$ the direction
realizing the inscribed radius in Theorem~\ref{thm:head}; $n=200$ (left),
$500$ (right), colored by winding number. The dashed circle is the inscribed
ball of radius $R_0=\pi/\sqrt2$. Moving along $\bm v_2$ drives a single phase difference $\eta_1$ at the
maximal rate $\sqrt2$, so the wall $\eta_1=\pi$ is reached at $t=\pi/\sqrt2$
independent of $n$; moving along the random $\bm v_1$ spreads the
displacement over all $n$ coordinates, so the nearest wall is only reached at
distance $\sim(\pi/2)\sqrt{n/\log n}$. The head is correspondingly elongated;
the surrounding cells are crossed at comparable rates in the two directions.}
\label{fig:slice_adversarial}
\end{figure}

%% ============================================================
\section{Proofs}
In this Section we prove propositions \ref{prop:invariance} and \ref{prop:boundary}, Corollary \ref{cor:stability}, and theorems A–D.

%% ============================================================

\subsection{Well-posedness and winding number invariance}
\label{sec:invariance}

We first record the well-posedness of~\eqref{eq:eta-model} on $\cJ$
and the behavior of the vector field at the boundary.

\begin{lemma}[Well-posedness and boundary extension]\label{lem:wellposed}
Under \textup{(H1)--(H3)}: \textup{(a)} the right-hand side
of~\eqref{eq:eta-model} is $C^1$ on $\cJ$, so for every
$\bm\eta(0)\in\cJ$ there is a unique maximal solution, defined as
long as the trajectory remains in $\cJ$; moreover, the
vector field in~\eqref{eq:eta-model} is continuous in $\overline{\cJ}=[-\pi,\pi]^n\cap\{\sum_i\eta_i\in2\pi\Z\}$; \textup{(b)} if a maximal solution has finite exit time
$t_0<\infty$, then $\bm\eta(t)$ converges as $t\uparrow t_0$ to a
point of $\partial\cJ$, the trajectory is $C^1$ up to $t_0$ with
one-sided derivative at $t_0$.
\end{lemma}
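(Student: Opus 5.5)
For part (a) I would read everything off the form of the vector field. Write $G_i(\bm\eta)=f(\eta_{i+1})-2f(\eta_i)+f(\eta_{i-1})$. By (H1), $f\in C^1(-\pi,\pi)$, so each $G_i$ is a sum of $C^1$ functions of single coordinates on $(-\pi,\pi)^n$. Since $\cJ$ is an open subset of the union of the affine hyperplanes $\{\sum_i\eta_i=2\pi k\}$, $G$ is $C^1$ there.

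Next I would check that the constraint is preserved. We have $\sum_i G_i=0$, because the discrete Laplacian on the cycle telescopes. Hence $\sum_i\eta_i$ is constant along solutions and the flow stays on one hyperplane. On that open piece of an affine space, Picard--Lindel\"of gives local existence and uniqueness. The standard extension theorem then gives a unique maximal solution, defined for as long as it remains in $\cJ$ (equivalently, until it leaves every compact subset of $\cJ$). Continuity on $\overline{\cJ}$ follows because (H1) gives $f\in C[-\pi,\pi]$. Here $f$ means the restriction to $[-\pi,\pi]$, not the periodization, which jumps at $\pm\pi$. So $G$ extends continuously to the closed box intersected with the hyperplanes.

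For part (b), the key point is that $G$ is bounded on $\cJ$: $|G_i|\le 4\max_{[-\pi,\pi]}|f|=:M$. Thus $\bm\eta$ is Lipschitz in $t$ on $[0,t_0)$, with $|\bm\eta(t)-\bm\eta(s)|\le \sqrt n\,M|t-s|$. A Lipschitz curve is uniformly continuous, so it is Cauchy as $t\uparrow t_0$ and the limit $\bm\eta(t_0)$ exists. This limit lies in $\overline{\cJ}$, since the box and the hyperplane are closed. It cannot lie in $\cJ$: otherwise local existence at $\bm\eta(t_0)$, together with uniqueness, would extend the solution past $t_0$, contradicting maximality. Hence $\bm\eta(t_0)\in\partial\cJ$.

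For the one-sided derivative, I would write
\[
\bm\eta(t)=\bm\eta(0)+\int_0^t G(\bm\eta(s))\,ds .
\]
The integrand $s\mapsto G(\bm\eta(s))$ is continuous on $[0,t_0]$. This uses the continuous extension of $G$ to $\overline{\cJ}$ from (a), applied at the limit point. By the fundamental theorem of calculus, the extended trajectory is $C^1$ on $[0,t_0]$, with left derivative $G(\bm\eta(t_0))$ at $t_0$.

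The only delicate point is the one flagged in (a): keeping the continuous extension on the closed box separate from the discontinuous periodization. The boundary value of $G$ must be taken with $f(\pm\pi)$ as the one-sided limits from inside $(-\pi,\pi)$. With that convention the rest is routine ODE theory.
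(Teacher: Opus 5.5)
Your proof is correct and follows the paper's argument. Part (a) is standard ODE theory from the $C^1$ regularity in (H1). For (b), boundedness of the field makes the trajectory Lipschitz, so it has a limit in $\overline{\cJ}$; maximality forces that limit onto $\partial\cJ$, and continuity of the extended field along the trajectory gives the one-sided derivative. Your extra care is a welcome addition that the paper leaves implicit: you check $\sum_i G_i=0$ so the flow lives on a relatively open piece of a hyperplane, and you take $f(\pm\pi)$ as one-sided limits rather than through the discontinuous periodization.
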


\begin{proof}
(a) is standard ODE theory: (H1) makes $\bm\eta\mapsto
\bigl(f(\eta_{i+1})-2f(\eta_i)+f(\eta_{i-1})\bigr)_i$ a $C^1$ vector
field on the open set $\cJ$ and continuous in $\overline{\cJ}$~\cite{Perko}. For (b), boundedness of the field gives $|\dot{\bm\eta}|\le C$ on $\cJ$, so
$\bm\eta(t)$ is Lipschitz up to $t_0$ and converges to a limit
$\bm\eta(t_0)\in\overline\cJ$; maximality forces
$\bm\eta(t_0)\in\partial\cJ$. Continuity of the field along
the trajectory gives the one-sided derivative at $t_0$.
\end{proof}

\begin{proof}[Proof of Proposition~\ref{prop:invariance}]
We argue by contradiction. Suppose $\bm\eta(0)\in\cJ$ and let
$t_0>0$ be the first time at which the trajectory reaches the
boundary $\partial\cJ$; by Lemma~\ref{lem:wellposed}(b) the limit
$\bm\eta(t_0)\in\partial\cJ$ exists and the trajectory has a
one-sided derivative at $t_0$ given by the field. At time
$t_0$ some component reaches $\pm\pi$; without loss of generality, we assume
$\eta_i(t_0)=\pi$ for some index $i$ (the case $\eta_i(t_0)=-\pi$ is
symmetric by (H2)). Since $\eta_i(t)\in(-\pi,\pi)$ for $t<t_0$ and
$\eta_i(t_0)=\pi$, the one-sided derivative satisfies
\begin{equation}
\dot\eta_i(t_0)\ \ge\ 0.
\label{eq:noexit-ineq}
\end{equation}

We use~\eqref{eq:eta-model} to evaluate the left-hand
side. Both neighboring components $\eta_{i\pm1}(t_0)$ lie in
$[-\pi,\pi]$, so by (H3),
\begin{equation*}
f(\eta_{i+1}(t_0))\le f(\pi), \qquad
f(\eta_{i-1}(t_0))\le f(\pi),
\end{equation*}
with equality if and only if the corresponding
$\eta_{i\pm1}(t_0)=\pi$. Substituting,
\begin{equation*}
\dot\eta_i(t_0) = f(\eta_{i+1}(t_0)) - 2f(\pi) +
f(\eta_{i-1}(t_0)) \ \le\ 0,
\end{equation*}
with equality if and only if $\eta_{i+1}(t_0)=\eta_{i-1}(t_0)=\pi$.
Combining with~\eqref{eq:noexit-ineq} forces $\dot\eta_i(t_0)=0$ and
$\eta_{i\pm1}(t_0)=\pi$. Iterating the same reasoning at the indices
$i\pm1$ and proceeding along the cycle forces $\eta_j(t_0)=\pi$ for
\emph{every} index $j$. But the configuration $\bm\eta\equiv\pi$ is
an equilibrium of the extended field, and uniqueness of solutions
forbids a trajectory starting in $\cJ$ from reaching it in finite
time. This contradiction shows no such $t_0$ exists, and $\cJ$ is
positively invariant.

Invariance of the winding number on $\cJ$ now follows by direct
computation: summing~\eqref{eq:eta-model} over $i$ and using the
cyclic boundary condition,
\begin{equation*}
\frac{d}{dt}\sum_{i=1}^n\eta_i(t) =
\sum_{i=1}^n\bigl[f(\eta_{i+1})-2f(\eta_i)+f(\eta_{i-1})\bigr]=0,
\end{equation*}
since the discrete Laplacian on a cycle telescopes to zero. Hence
$I({\bm \theta}(t))$ is constant in time. Finally, the set of initial conditions
with some $\eta_i=\pi$ is a finite union of hyperplane sections of
$\T^n$, of Lebesgue measure zero, which gives the almost-sure
statement.
\end{proof}

\begin{remark}[Failure for the sine coupling]\label{rem:sin-failure}
The corresponding statement for $f=\sin$ is strictly weaker. There
(H3) holds only on $(-\pi/2,\pi/2)$, and the maximum-principle
argument yields invariance only of the smaller region
$\cI=\{|\eta_i|<\pi/2\}$. Random initial conditions typically fall
outside $\cI$, so for the standard Kuramoto model one must analyze a
transient phase before the winding number is controlled---the
partially rigorous step in~\cite{groisman2025size}, which shows
trajectories enter $\cI$ at time $t\propto\log n$. That suffices for
the basin-volume scaling but not for the geometric statements of
Theorems~\ref{thm:master}--\ref{thm:head}, which concern the basins
as subsets of state space at $t=0$.
\end{remark}

%% ============================================================
\subsection{Stability of twisted states}\label{sec:stability}
%% ============================================================

\begin{proof}[Proof of Corollary~\ref{cor:stability}]
The system~\eqref{eq:model-intro} is the gradient flow of the energy
\begin{equation}
E(\bm\theta) = \sum_{j=1}^n F(\eta_j),
\qquad F'=f,\quad F(0)=0.
\label{eq:energy}
\end{equation}
Hypothesis (H3) makes $F$ strictly convex on $(-\pi,\pi)$. By
Proposition~\ref{prop:invariance}, every trajectory starting in
$\cK_q$ stays in $\cK_q$ and preserves the constraint
$\sum_j\eta_j=2\pi q$. Strict convexity of $F$ together with the
method of Lagrange multipliers gives a unique minimizer of $E$ on
this affine slice: the configuration with all $\eta_j$ equal, that
is, the $q$-twisted state, which requires the common value
$2\pi q/n\in(-\pi,\pi)$, i.e.\ $|q|<n/2$. Since $E$ is bounded below
on $\overline\cJ$ and strictly decreases along non-equilibrium
trajectories, LaSalle's invariance principle~\cite{Perko} implies
every trajectory in $\cK_q$ converges to the $q$-twisted state.
\end{proof}

%% ============================================================
\subsection{Basin volumes.}
\label{sec:volumes}
%% ============================================================

\begin{proof}[Proof of Theorem~\ref{thm:gaussian}]
By Proposition~\ref{prop:invariance} and
Corollary~\ref{cor:stability}, for $\mu$-almost every
$\bm\theta(0)\in\T^n$ the trajectory converges to the twisted state
whose winding number is $I(\bm \theta(0))$. Hence, up to a $\mu$-null set,
\begin{equation*}
\cK_q = \bigl\{\bm\theta\in\T^n:\ I(\bm \theta)=q\bigr\}.
\end{equation*}
Sampling $\bm\theta$ uniformly on $\T^n$ is equivalent to sampling
$\eta_1,\dots,\eta_{n-1}$ i.i.d.\ uniformly on $(-\pi,\pi]$ and
setting $\eta_n=-\sum_{i=1}^{n-1}\eta_i\pmod{2\pi}$. Define
$X_i=\eta_i/(2\pi)$, so the $X_i$ are i.i.d.\ uniform on
$(-\tfrac12,\tfrac12]$ with $\E X_i=0$, $\operatorname{Var}X_i=1/12$,
and, by~\eqref{eq:winding-integer-part},
\begin{equation*}
I(\bm \theta) = \Bigl[\,S_{n-1}\,\Bigr], \qquad
S_{n-1}=\sum_{i=1}^{n-1}X_i .
\end{equation*}
The normalized sum $\sqrt{\frac{12}{n-1}}S_{n-1}$ has a bounded continuous density for $n\ge 3$, so
the local central limit theorem for densities~\cite[Ch.~VII]{Petrov}
gives
\begin{equation*}
\sup_{x\in\R}\ \Bigl|\, p_{n}(x) - \varphi\bigl(x\bigr)\Bigr|
\ \longrightarrow\ 0,
%, \qquad \sigma_n^2=\frac{n-1}{12},
\end{equation*}
where $p_n$ is the density of $\sqrt{\frac{12}{n-1}}S_{n-1}$ and $\varphi$ the standard
Gaussian density. Hence, uniformly for integer $q=O(\sqrt n)$,
\begin{equation*}
\mu(\cK_q) = \PP\bigl([S_{n-1}]=q\bigr)
= \int_{\sqrt{\frac{12}{n-1}}(q-\frac12)}^{\sqrt{\frac{12}{n-1}}(q+\frac12)} p_n(x)\,dx
= \frac{1}{\sqrt{2\pi(n-1)/12}}\,
e^{-6q^2/(n-1)}\bigl(1+o(1)\bigr),
\end{equation*}
which is~\eqref{eq:gaussian}.
\end{proof}

Theorem~\ref{thm:gaussian} confirms the original
Wiley--Strogatz--Girvan conjecture~\cite{wiley2006size} and the
global form of the numerical observations of Zhang and
Strogatz~\cite{zhang2021basins}, while bypassing the
local-versus-global tension that produced the apparent disagreement
between~\cite{wiley2006size} and~\cite{delabays2017size}: the
Gaussian law is a global statement about the entire basin, and is
invisible to local sampling.

%% ============================================================
\subsection{The typical distance distribution.}\label{sec:master}
%% ============================================================

\begin{proof}[Proof of Theorem~\ref{thm:master}]
By rotational invariance of $\mu$ on $\T^n$ we may take
$\bm\theta^*=\bm 0$, so the coordinate distances $d_i$ are i.i.d.\
uniform on $[0,\pi]$ with $\E(d_i^2)=\pi^2/3$ and finite variance.
The strong law of large numbers gives
$\frac1n\sum_id_i^2\xrightarrow{\mathrm{a.s.}}\pi^2/3$, hence
\eqref{eq:master}, and the central limit theorem gives the Gaussian
fluctuations of order $n^{-1/2}$.

For the conditional statement, fix $q$ with $|q|=o(\sqrt n)$ and
condition on the event $\{I(\bm\theta)=q\}$, which by
Theorem~\ref{thm:gaussian} has probability of order $n^{-1/2}$.
%% TODO PG: this conditional step deserves a full argument in the math
%% version (a referee will poke exactly here). Suggested route: a
%% bivariate local CLT for the pair (S_{n-1}, n^{-1} sum d_i^2), or an
%% exchangeable-tilting argument showing the conditional law of
%% (eta_1, ..., eta_{n-1}) given [S_{n-1}] = q is contiguous to the
%% unconditional law on the scale relevant for the LLN. The PRL/PNAS
%% versions cite Petrov Ch. VII and assert the extension is
%% straightforward; here we should write it. I left the statement as
%% in the PRL version; the proof below is the outline to be expanded.
Conditioning on $[S_{n-1}]=q$ fixes one linear functional of
$(\eta_1,\dots,\eta_{n-1})$ within its typical range; a bivariate
local limit theorem for the pair
$\bigl(S_{n-1},\,\sum_{i}d_i^2\bigr)$~\cite[Ch.~VII]{Petrov} shows
that the conditional law of $n^{-1}\sum_id_i^2$ still concentrates at
$\pi^2/3$, with the same first-order fluctuations.
\end{proof}

The boundaries of the basins $\cK_q$ are the configurations at which
the winding number is ill-defined: those with some $\eta_i=\pi$
(with $\pm\pi$ identified on the circle). The next observation
explains why local hypercube approximations cannot capture basin
volume.

\begin{proposition}[Boundary proximity]\label{prop:boundary}
Let $\bm\theta$ be uniform on $\T^n$ and $\delta\in(0,\pi)$. The
number of phase differences $\eta_i$, $1\le i\le n-1$, within
$\delta$ of $\pi$ has distribution $\operatorname{Binomial}(n-1,\delta/\pi)$. In particular, if $\delta=\delta_n$ with
$n\delta_n\to\infty$, then with probability tending to one at least
$n\delta_n/(2\pi)$ of the $\eta_i$ lie within $\delta_n$ of $\pi$, a
number that diverges with $n$.
\end{proposition}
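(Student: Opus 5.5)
The plan is to use the reduction already made in the proof of Theorem~\ref{thm:gaussian}. Under $\bm\theta\sim\mu$, the differences $\eta_1,\dots,\eta_{n-1}$ are i.i.d.\ uniform on $(-\pi,\pi]$; $\eta_n$ is determined by them and is excluded from the count. For each $i\le n-1$, let $B_i$ be the indicator that $\eta_i$ lies within $\delta$ of $\pi$ on the circle, with $\pm\pi$ identified. This is the event $\eta_i\in(\pi-\delta,\pi]\cup(-\pi,-\pi+\delta)$, an arc of total length $2\delta$. Its probability is therefore $2\delta/(2\pi)=\delta/\pi$, which lies in $(0,1)$ because $\delta<\pi$. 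Since the $B_i$ are functions of independent variables, they are i.i.d.\ Bernoulli$(\delta/\pi)$. Hence $N=\sum_{i=1}^{n-1}B_i$ is exactly $\operatorname{Binomial}(n-1,\delta/\pi)$.

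For the asymptotic claim, take $\delta=\delta_n$ with $n\delta_n\to\infty$. Then
\[
\E N=(n-1)\delta_n/\pi \quad\text{and}\quad \operatorname{Var}N\le \E N .
\]
For large $n$ we have $(n-1)/\pi\ge n/(1.5\pi)$, so $n\delta_n/(2\pi)\le \tfrac34\E N$. Chebyshev's inequality then gives
\[
\PP\bigl(N<n\delta_n/(2\pi)\bigr)\le \PP\bigl(|N-\E N|\ge \tfrac14\E N\bigr)\le \frac{16}{\E N}\longrightarrow0 .
\]
So with probability tending to one, $N\ge n\delta_n/(2\pi)$, and this lower bound diverges.

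I expect no step to be a real obstacle. The only point needing care is the arc length: with $\pm\pi$ identified, the neighbourhood of $\pi$ is two-sided, which is what produces the factor $\delta/\pi$ rather than $\delta/(2\pi)$. The step from $n\delta_n\to\infty$ to $\E N\to\infty$ is immediate.
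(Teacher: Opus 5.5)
Your proof is correct and matches the paper's argument: $\eta_1,\dots,\eta_{n-1}$ are i.i.d.\ uniform, each lies within $\delta$ of $\pi$ with probability $\delta/\pi$, so the count is binomial, and Chebyshev gives the concentration. Treating the neighbourhood as a two-sided arc, with $\pm\pi$ identified, is what justifies $\delta/\pi$; the paper's literal $|\eta_i-\pi|<\delta$ on $(-\pi,\pi]$ would give $\delta/(2\pi)$. Your explicit bound $16/\E N$ also spells out the Chebyshev step that the paper leaves implicit.
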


\begin{proof}
Each $\eta_i$ is uniform on $(-\pi,\pi]$, so
$\PP(|\eta_i-\pi|<\delta)=\delta/\pi$ and the count is binomial with
mean $(n-1)\delta/\pi$. When $n\delta_n\to\infty$ the mean diverges
and the claim follows from concentration of the binomial
distribution (Chebyshev suffices).
\end{proof}

A random point of state space therefore lies arbitrarily close to
the boundary of some $\cK_q$, and indeed near many such boundaries
simultaneously. Combined with Theorem~\ref{thm:master}: local
samples from $\cK_q$ near $\bm\theta^{(q)}$ form a vanishing fraction
of the basin's volume \emph{and} sit close to many boundaries with
other basins. Local measurements miss the bulk and misidentify what
they catch.

%%=============================================================
\subsection{Boundary crossings}
%%============================================================

% Theorem \ref{thm:tentacles} states that a random ray visits all the basins with frequencies proportional to their volume, however this phenomena is not a characteristic property of the high dimensionality of the space. In fact, it also happens for any partition of the torus in any dimension. Think of example in the two dimensional torus $[-\pi,\pi)^2$ and the set $\{\theta_1 <0\}$ and its complement. A random ray crosses the boundaries infinitely many times and spens half the time on each set. But this happens because the (infinite) ray wraps infinitely many time around the torus, not because the sets are ``all around''. So, to describe high dimensional basis properly, we need to distinguish from this example. That is the following of the next corollary.

Theorem~\ref{thm:tentacles} states that a random ray visits every basin
with frequency equal to its volume. This phenomenon, however, is not by
itself a signature of high dimension: it holds for \emph{any} partition of
the torus into sets of positive measure, in any dimension. Consider, for
instance, the two-dimensional torus and the partition into the two halves
$\{\theta_1<0\}$ and $\{\theta_1\ge0\}$. A generic ray crosses the
boundary infinitely many times and spends half of its length in each set
--- but only because the infinite ray wraps around the torus infinitely
many times, not because either set is spread ``all around'' the space. To
capture what is genuinely high-dimensional in the geometry of the basins,
one must look at a finite segment, short enough that no coordinate wraps
around its circle. The next proposition does this: already the geodesic
segment from the attractor to a typical point of state space --- along
which every coordinate moves by less than $\pi$ --- crosses $n/4 + o(n)$
basin boundaries. In the two-dimensional example above, the corresponding
count is at most one.

\begin{proposition}[Boundary crossings along a typical segment]\label{cor:crossings}
Fix $\bm\theta^*\in\T^n$ with phase differences
$c_i\in(-\pi,\pi)$, $i=1,\dots,n$, and let $\tilde{\bm\theta}\sim\mu$ be
uniform and independent. Let $[\bm\theta^*,\tilde{\bm\theta}]$ denote the
geodesic segment $\bm\theta(s)=\bm\theta^*+s\bm\delta$, $s\in[0,1]$, where
$\delta_i\in(-\pi,\pi]$ is the signed displacement on $S^1$ from
$\theta^*_i$ to $\tilde\theta_i$, and let
\[
N_n \;=\; \#\bigl\{(i,s):\ 1\le i\le n,\ s\in(0,1),\
\eta_i(s)\in\pi+2\pi\Z\bigr\}
\]
be the number of basin-boundary crossings along the segment. Then, writing $m_n=\dfrac14+\dfrac{1}{4\pi^2 n}\displaystyle\sum_{i=1}^n c_i^2$,
for every $n\ge3$ and every $\varepsilon>0$,
\begin{equation}\label{eq:crossings}
\PP\Bigl(\Bigl|\frac{N_n}{n}-m_n\Bigr|\ge\varepsilon\Bigr)
\;\le\; 2\,e^{-\varepsilon^2 n/2}.
\end{equation}
% in particular $N_n/n-m_n\to0$ in probability, and completely: the
% probabilities in \eqref{eq:crossings} are summable in $n$, so the
% convergence is almost sure under any realization of the systems on a
% common probability space.
% Then
% \begin{equation}\label{eq:crossings}
% \frac{N_n}{n}\;-\;\Bigl(\frac14+\frac{1}{4\pi^2 n}\sum_{i=1}^n c_i^2\Bigr)
% \;\xrightarrow{\ \PP\ }\;0, \qquad n\to\infty .
% \end{equation}
In particular:
\begin{enumerate}
\item[(i)] if $\bm\theta^*=\bm\theta^{(q)}$ with $|q|=o(n)$, then
$N_n/n\to 1/4$ in probability;
\item[(ii)] if $\bm\theta^*$ is itself uniform on $\T^n$ and independent of
$\tilde{\bm\theta}$, then $N_n/n\to 1/3$ in probability.
\end{enumerate}
\end{proposition}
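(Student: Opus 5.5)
The plan is to reduce $N_n$ to a sum of bounded-range indicators of the i.i.d.\ displacements $\delta_i$, compute each mean exactly, and then apply a bounded-differences inequality.

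\textbf{Reduction to indicators.} Along the segment the phase differences are affine in $s$:
\[
\eta_i(s)=c_i+s\,D_i,\qquad D_i=\delta_{i+1}-\delta_i .
\]
Since $\tilde{\bm\theta}$ is uniform and independent of $\bm\theta^*$, the signed displacements $\delta_1,\dots,\delta_n$ are i.i.d.\ uniform on $(-\pi,\pi]$. Hence $D_i$ has the triangular density $(2\pi-|x|)/(4\pi^2)$ on $(-2\pi,2\pi)$. Because $|c_i|<\pi$ and $|D_i|<2\pi$, the endpoint $c_i+D_i$ lies in $(-3\pi,3\pi)$. So the monotone path $s\mapsto\eta_i(s)$ can meet $\pi+2\pi\Z$ at most once, namely at $\pi$ if $D_i>0$ or at $-\pi$ if $D_i<0$. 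Up to a null event (endpoint exactly at $\pm\pi$), this gives
\[
N_n=\sum_{i=1}^n X_i,\qquad X_i=\mathbf 1\{|c_i+D_i|>\pi\}.
\]

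\textbf{Mean.} Integrating the triangular density,
\[
\PP(c+D>\pi)=\int_{\pi-c}^{2\pi}\frac{2\pi-x}{4\pi^2}\,dx=\frac{(\pi+c)^2}{8\pi^2},
\]
and symmetrically $\PP(c+D<-\pi)=(\pi-c)^2/(8\pi^2)$. Adding the two,
\[
\E X_i=\frac14+\frac{c_i^2}{4\pi^2},\qquad\text{so}\qquad \E N_n/n=m_n .
\]

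\textbf{Concentration.} $N_n/n$ is a function of the independent variables $\delta_1,\dots,\delta_n$. Changing a single $\delta_j$ affects only $X_{j-1}$ and $X_j$, so it changes $N_n/n$ by at most $2/n$. McDiarmid's bounded-differences inequality then gives
\[
\PP\bigl(|N_n/n-m_n|\ge\varepsilon\bigr)\le 2\exp\!\Bigl(-\frac{2\varepsilon^2}{n\,(2/n)^2}\Bigr)=2e^{-\varepsilon^2 n/2},
\]
which is exactly \eqref{eq:crossings}. I expect this to be the only step needing care. The indicators are $1$-dependent rather than independent, and the bounded-differences formulation handles this without any blocking argument. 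One must also note that the null sets (where some $c_i+D_i=\pm\pi$ or $D_i=0$) are negligible.

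\textbf{Corollaries.} For (i), a twisted state has $c_i=2\pi q/n$ for all $i$, so $m_n=\tfrac14+q^2/n^2\to\tfrac14$ when $|q|=o(n)$; the bound \eqref{eq:crossings} then gives convergence in probability. For (ii), condition on $\bm\theta^*$. The bound \eqref{eq:crossings} holds conditionally, uniformly in $\bm\theta^*$, since only $|c_i|<\pi$ was used, and this holds almost surely. Given $\bm\theta^*$ uniform, the $c_1,\dots,c_{n-1}$ are i.i.d.\ uniform on $(-\pi,\pi)$ with $\E c_i^2=\pi^2/3$. The single constrained term $c_n$ is bounded and contributes $O(1/n)$. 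So the law of large numbers gives
\[
m_n\to\frac14+\frac{1}{4\pi^2}\cdot\frac{\pi^2}{3}=\frac13
\]
almost surely. Combining this with the conditional concentration bound yields $N_n/n\to1/3$ in probability.
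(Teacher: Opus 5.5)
Your proposal is correct and follows the paper's proof essentially step by step. Both write $\eta_i(s)=c_i+sD_i$ with at most one crossing per bond, compute the exact mean $\tfrac14+c_i^2/(4\pi^2)$ (you integrate the triangular density of $D_i$, the paper computes two corner triangles of the square), and apply McDiarmid with bounded-difference constant $2$. The only variation is in (ii): the paper applies McDiarmid jointly to the $2n$ independent coordinates $(\theta^*_1,\dots,\theta^*_n,\delta_1,\dots,\delta_n)$ and uses $\E m_n=1/3$ exactly, whereas you condition on $\bm\theta^*$ and add a law of large numbers for $m_n$; convergence of $m_n$ in probability, e.g.\ by Chebyshev, is all that is needed there.
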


\begin{remark}\label{rem:onecopy}
The terminal winding number satisfies $\E I(\tilde{\bm\theta})=0$ and is
$O(\sqrt n)$ in probability, while the number of
$\pm1$ steps of $I(\bm\theta(s))$ along the segment grows like $n/4$: the
steps cancel massively. Note also that the attractor minimizes the
crossing count: from a twisted state the constant is $1/4$, while between
two independent uniform points it is $1/3$.
\end{remark}

\begin{proof}
Along the segment the phase differences are affine,
$\eta_i(s)=c_i+s\,w_i$ with $w_i=\delta_{i+1}-\delta_i\in(-2\pi,2\pi)$.
Note that it is the \emph{unwrapped} increment $w_i$ that governs
crossings: its reduction mod $2\pi$ is uniform on $(-\pi,\pi]$, but a
crossing is a property of the swept interval from $c_i$ to $c_i+w_i$ in
the lift, and replacing $w_i$ by $w_i\mp2\pi$ changes the number of points
of $\pi+2\pi\Z$ in that interval by one.

Since $|w_i|<2\pi$, the swept interval contains at most one point of
$\pi+2\pi\Z$, so each bond contributes at most one crossing, and it does
so if and only if
\[
\delta_{i+1}-\delta_i>\pi-c_i
\qquad\text{or}\qquad
\delta_{i+1}-\delta_i<-(\pi+c_i).
\]
Because $\tilde{\bm\theta}$ is uniform, the pair $(\delta_i,\delta_{i+1})$
is uniform on the square $(-\pi,\pi]^2$, of area $4\pi^2$. The two
conditions above cut off two corner triangles of this square, with legs of
lengths $\pi-c_i$ and $\pi+c_i$ respectively, hence of areas
$(\pi-c_i)^2/2$ and $(\pi+c_i)^2/2$ (both thresholds lie in $(0,2\pi)$
since $|c_i|<\pi$). Therefore
\[
p_i:=\PP(\text{bond $i$ crosses})
=\frac{(\pi-c_i)^2+(\pi+c_i)^2}{8\pi^2}
=\frac14+\frac{c_i^2}{4\pi^2}.
\]
Writing $N_n=\sum_i X_i$ with $X_i$ the indicator of a crossing at bond
$i$, we have $\E N_n=\sum_i p_i=n\,m_n$. Moreover $N_n$ is a function of
the independent coordinates $\delta_1,\dots,\delta_n$, and changing a
single $\delta_j$ alters at most the two indicators $X_{j-1},X_j$, hence
changes $N_n$ by at most $2$. McDiarmid's bounded-differences
inequality~\cite{McDiarmid,BoucheronLugosiMassart} then gives, for $t>0$,
\[
\PP\bigl(|N_n-\E N_n|\ge t\bigr)\le
2\exp\Bigl(-\frac{2t^2}{4n}\Bigr),
\]
and $t=\varepsilon n$ yields \eqref{eq:crossings}. For (i), $c_i\equiv 2\pi q/n$, so $\frac{1}{4\pi^2 n}\sum_i c_i^2
=q^2/n^2\to0$ when $|q|=o(n)$. For (ii), $N_n$ is a function of the $2n$ independent coordinates
$(\theta^*_1,\dots,\theta^*_n,\delta_1,\dots,\delta_n)$, each with
bounded-differences constant $2$, so the same argument gives
$\PP(|N_n-\E N_n|\ge\varepsilon n)\le 2e^{-\varepsilon^2 n/4}$, now with
$\E N_n/n=\E m_n=\frac14+\frac{\E c_1^2}{4\pi^2}
=\frac14+\frac1{12}=\frac13$ exactly, for every $n$.
\end{proof}

%% ============================================================
\subsection{Equidistribution along rays.}\label{sec:rays}
%% ============================================================

\begin{proof}[Proof of Theorem~\ref{thm:tentacles}]
The map $\lambda\mapsto\bm\theta^{(q)}+\lambda\bm v\pmod{2\pi}$ is a
linear flow on $\T^n$. By the Weyl--Kronecker equidistribution
theorem~\cite{Weyl1916,KuipersNiederreiter}, the orbit is
equidistributed on $\T^n$ with respect to $\mu$ if and only if
$\sum_ik_iv_i\neq0$ for every $\bm k\in\Z^n\setminus\{\bm 0\}$,
i.e.\ if $v_1,\dots,v_n$ are rationally independent. For each fixed
$\bm k\neq\bm 0$, the set $\{\bm v\in S^{n-1}:\sum_ik_iv_i=0\}$ is a
codimension-one subsphere, of spherical measure zero; the exceptional
set is the countable union over $\bm k$, still of measure zero.

Equidistribution of the orbit gives~\eqref{eq:equidist} for every
$\mu$-continuity set, that is, every Borel set whose topological
boundary is $\mu$-null. Each $\cK_{q'}$ qualifies: its boundary is
contained in the null set $\bigcup_i\{\eta_i=\pi\}$, a finite union
of hyperplane sections of $\T^n$. This proves~\eqref{eq:equidist}.
Finally, if $\mu(\cK_{q'})>0$ then by~\eqref{eq:equidist} the ray
spends a set of $\lambda$ of infinite Lebesgue measure inside
$\cK_{q'}$ and, by the same token applied to the complement, a set of
infinite measure outside it; hence it enters and exits $\cK_{q'}$
infinitely many times.
\end{proof}

%% ============================================================
\subsection{The head of the octopus.}\label{sec:head}
%% ============================================================

Throughout this section, $A$ denotes the cycle difference matrix
whose $i$th row is $\bm e_{i+1}-\bm e_i$ (indices mod $n$), so that
$\bm\eta=A\bm\theta$.

\begin{proof}[Proof of Theorem~\ref{thm:head}]
Write $\beta=2\pi q/n$, so the $q$-twisted state has all phase differences
equal to $\beta$, with $\beta\in(-\pi,\pi)$ since $|q|<n/2$. Along a ray
$\boldsymbol\theta^{(q)}+\lambda\bm v$ the phase differences are
$\eta_i(\lambda)=\beta+\lambda w_i$ with $\bm w=A\bm v$, and the ray
remains in $\mathcal K_q$ as long as $\eta_i(\lambda)\in(-\pi,\pi)$ for all $i$.
The first crossing is therefore
\begin{equation}
\lambda^*(\bm v)=\min_{i\,:\,w_i\neq0}
\frac{\pi-\beta\,\operatorname{sgn}(w_i)}{|w_i|},
\label{eq:lambda-star}
\end{equation}
the $i$th term being the value of $\lambda$ at which $\eta_i$ reaches the
nearer of the two walls $\pm\pi$ in the direction of travel.

{(i)} For the inscribed radius we minimize \eqref{eq:lambda-star} over unit
vectors. For each fixed $i$,
\[
\inf_{\|\bm v\|=1}\frac{\pi-\beta\,\operatorname{sgn}(w_i)}{|w_i|}
=\frac{\pi-|\beta|}{\displaystyle\sup_{\|\bm v\|=1}|w_i|},
\]
since one is free to choose the sign of $w_i$, and the smaller numerator
$\pi-|\beta|$ is obtained by aligning $\operatorname{sgn}(w_i)$ with
$\operatorname{sgn}(\beta)$. Now $w_i=(A\bm v)_i=\langle \bm e_{i+1}-\bm e_i,\bm v\rangle$,
so $\sup_{\|\bm v\|=1}|w_i|=\|\bm e_{i+1}-\bm e_i\|_2=\sqrt2$,
attained at $\bm v=\pm(\bm e_{i+1}-\bm e_i)/\sqrt2$ (which is
automatically orthogonal to $\mathbf 1$). Taking the infimum over $i$ as well,
\[
R_q=\inf_{\|\bm v\|=1}\lambda^*(\bm v)=\frac{\pi-|\beta|}{\sqrt2}
=\frac{\pi}{\sqrt2}\Bigl(1-\frac{2|q|}{n}\Bigr).
\]
Note that $R_q\to0$ as $|q|\to n/2$, the point at which $\boldsymbol\theta^{(q)}$
ceases to be stable (Corollary~\ref{cor:stability}) and that $R_q/\sqrt n\to0$, uniformly in $q$ as $n\to\infty$.

{(ii)} For a typical direction the dependence on $q$ is negligible. Since
$|q|=o(n)$ we have $\beta=2\pi q/n\to0$, so the two walls are $\pi\mp\beta=\pi(1+o(1))$ uniformly, and
$\lambda^*(\bm v)=\pi(1+o(1))/\|A\bm v\|_\infty$. It thus suffices to
analyze $\|A\bm v\|_\infty$ for $\bm v$ uniform on $S^{n-1}$. Use the Gaussian representation of the uniform direction:
let $\bm g=(g_1,\dots,g_n)$ be a standard Gaussian vector, so that
$\bm v=\bm g/\|\bm g\|_2$ is uniform on $S^{n-1}$ and
\begin{equation*}
\|A\bm v\|_\infty
= \frac{\max_{1\le i\le n}|\xi_i|}{\|\bm g\|_2},
\qquad \xi_i:=g_{i+1}-g_i .
\end{equation*}
The sequence $(\xi_i)_{1\le i\le n}$ (indices mod $n$) is a
stationary Gaussian sequence with $\E\xi_i=0$,
$\operatorname{Var}\xi_i=2$, and covariance
$\operatorname{Cov}(\xi_i,\xi_j)=-1$ for $|i-j|=1$ and $0$ for
$2\le|i-j|\le n-2$: it is $1$-dependent up to the cyclic wrap.
Classical extreme-value theory for stationary Gaussian sequences
under Berman's condition~\cite{LeadbetterLindgrenRootzen}---trivially
satisfied for $m$-dependent sequences---gives
\begin{equation*}
\frac{\max_{i}|\xi_i|}{\sqrt{2\cdot 2\log n}}
\ \to \ 1, \quad \text{in probability.}
\end{equation*}
while the law of large numbers gives
$\|\bm g\|_2/\sqrt n\to 1$, almost surely. Hence
\begin{equation*}
\|A\bm v\|_\infty\,\cdot\,\frac{\sqrt n}{2\sqrt{\log n}}
\ \to \ 1, \quad \text{in probability.}
\end{equation*}
and substituting into~\eqref{eq:lambda-star} yields
$\lambda^*(\bm v)\sqrt{\log n/n}\to\pi/2$ in probability.
\end{proof}

%% ============================================================
\section{Discussion and open problems}\label{sec:discussion}
%% ============================================================

The geometry established here does not depend on any specific choice of $f$ within hypotheses (H1)--(H3). It
is a consequence of two structural facts: a global integer invariant---the winding number---that the dynamics conserves, and the geometry of the uniform measure on $\T^n$ and the sets $\mathcal K_q$. The role of $f$ is
confined to the within-basin dynamics; basin volumes, the typical distance distribution, the head size, and the equidistribution of generic rays follow from the conservation law and the initial measure
alone. The results of \cite{zhang2021basins, groisman2025size} suggest that this universality extends at least to any $f$ which is increasing in an interval $(-a,a)$ centered at the origin and decreasing in the complement. In fact, Proposition \ref{prop:invariance} holds in this case for $\mathcal I=\{|\eta_i|< a, i=1,\dots, n\}$ independently of the shape of $f$. We also expect the universal properties to hold for a larger family of graphs that include at least grid discretizations of the torus in dimension $d\ge 2$ and random geometric graphs in the torus in any dimension. Numerical observations of similar geometry in
glasses~\cite{stillinger2015energy,charbonneau2017glass}, jammed
packings~\cite{ashwin2012calculations,martiniani2016structural,martiniani2017thesis,suryadevara2025basins, casiulis2023when}, and loss landscapes of neural
networks~\cite{choromanska2015loss,li2018visualizing} suggest that
the same template organizes octopus geometry far beyond oscillator
networks. Our results are the first rigorous instance of this universality picture.

The energy form $E(\bm\theta)=\sum_jF(\eta_j)$ is not isolated to
this model. The cluster dynamics of self-attention layers in
transformers are not a gradient flow but the associated ODE is structurally analogous
~\cite{geshkovski2024emergence}, and the AKOrN architecture
uses Kuramoto-type oscillator coupling explicitly as the forward pass
of a neural network~\cite{miyato2025akorn}. In both, what the network
selects at inference is fixed by basin geometry. The cycle is the
simplest topology in which a single integer invariant controls basin
labels; the same framework---topological invariants paired with the
geometry of the natural initial measure---extends naturally to
oscillator networks on richer
topologies~\cite{delabays2016multistability,jafarpour2022flow}.

To summerize, we close with the directions this work leaves open.
\begin{enumerate}
\item[(1)] \emph{Beyond the cycle.} Identify which features of graph
topology and coupling determine the universal behavior established
here, and quantify how far the rigorous picture transports to
landscapes carrying no integer invariant.
\item[(2)] \emph{Back to the sine coupling and beyond.} The geometric statements
of Theorems~\ref{thm:master}--\ref{thm:head} concern the basins at
$t=0$ and are not accessible to the entry-time analysis
of~\cite{groisman2025size}; deciding which of them survive for
$f=\sin$ and more general couplings, remains open.
\end{enumerate}

\section*{Acknowledgments}
The author thanks Yuanzhao Zhang, Cecilia De Vita, and Juli\'an
Fern\'andez Bonder for earlier collaborations that made this work
possible, Stefano Martiniani for pointing out the prior-art lineage
on octopus basins in jammed packings, Nicol\'as Garc\'ia Trillos for
the connection with transformer dynamics, and Steven Strogatz for
inspiring discussions. The author acknowledges the use of Claude
(Anthropic) for assistance with LaTeX drafting and editorial
polishing. The author assumes responsibility for all content. The Python code used to generate the figures is available upon reasonable request.

\bibliographystyle{amsplain}
\bibliography{bibli}

\end{document}